\documentclass[preprint,10pt]{elsarticle}
\usepackage{url}
\usepackage{amsmath}
\usepackage{amssymb}
\usepackage{amsthm}
\usepackage{graphics}
\usepackage{graphicx}
\usepackage{grffile}
\usepackage{times}
\usepackage{algorithm,algorithmic}

\newcommand{\E}[1]{\mathbf{E}\left[#1\right]}

\newtheorem{theorem}{Theorem}
\newtheorem{corollary}[theorem]{Corollary}
\newtheorem{assumption}[theorem]{Assumption}
\newtheorem{definition}[theorem]{Definition}
\newtheorem{lemma}[theorem]{Lemma}

\urldef{\mailsa}\path|rafal.kapelko@pwr.edu.pl|

\begin{document}
\begin{frontmatter}
\title{%Preliminary Notes On the Replicated Erasure Codes
On the data persistency of replicated erasure codes in distributed storage systems 
}
\author[tech]{Roy Friedman\fnref{techfootnote}}%\corref{cor2}}
\ead{roy@cs.technion.ac.il}
\author[pwr]{Rafa\l{} Kapelko\corref{cor1}\fnref{pwrfootnote}}
\ead{rafal.kapelko@pwr.edu.pl}
\author[km]{Karol Marchwicki\fnref{kmfootnote}}%\corref{cor2}}
\ead{karol.marchwicki@gmail.com }
\fntext[techfootnote]{Research supported by the Israel Science Foundation grant \#1505/16.}
\fntext[pwrfootnote]{Supported by Polish National Science Center (NCN) grant 2019/33/B/ST6/02988}
%\fntext[kmfootnote]{????}
\cortext[cor1]{Corresponding author at: Department of Fundamentals of Computer Science,
%Faculty of Fundamental Problems of Technology, 
Wroc{\l}aw University of Technology, 
 Wybrze\.{z}e Wyspia\'{n}skiego 27, 50-370 Wroc\l{}aw, Poland. Tel.: +48 71 320 33 62; fax: +48 71 320 07 51.}
 %\cortext[cor2]{Corresponding author at: School of Computer Science, Carleton University
%1125 Col. By Dr., Ottawa, ON K1S 5B6, Canada. Tel.: 613 520 2600 x 8090; fax:a 613 520 4334.}
\address[tech]{Computer Science Department, Technion - Israel Institute of Technology, Haifa 32000, Israel}
\address[pwr]{ Department of Fundamentals of Computer Science, 
%Faculty of Fundamental Problems of Technology, 
Wroc{\l}aw University of Technology, Poland}
%\address[km]{?????}

\begin{abstract}
This paper studies the fundamental problem of \textit{data persistency} for a general 
family of redundancy schemes in distributed storage systems, called
\textit{replicated erasure codes.}
Namely, we analyze two strategies of replicated erasure codes distribution: \textit{random} and \textit{symmetric}.
For both strategies we derive closed analytical and asymptotic formulas for expected data persistency despite nodes \textit{failure}.
\end{abstract} 

\begin{keyword}
erasure codes, replication, storage system, asymptotic
%Analysis of Algorithms, peer-to-peer network, replication of documents, ersaure codes, asymptotic
%\MSC[2010] 68R05\sep  60K30
\end{keyword}
\end{frontmatter}

\section{Introduction}
\label{sec:Introduction}

\subsection{Background}
\label{sec:background}
Distributed storage systems are very common these days~\cite{aghayev2020, bracher2019,  elyasi2020, rawat2016}.
Storing data on multiple servers enables scalability, i.e., the ability to handle a large number of clients, a large number of concurrent requests, and large amounts of data.
Further, geo-distribution helps reduce client access times by routing client's requests to a server that is relatively close to them in terms of Internet topology.
Finally, distributed systems offer an inherent level of fault-tolerance, since if one server fails or becomes disconnected, other servers can potentially step in to continue the service.

An important criteria for storage systems is long term data persistency.
That is, it should be possible to retrieve any stored data item even long after it has last been actively saved.
This is despite failures and even complete erasures of storage units, which may occur due to wear and tear, operator errors, bugs, and even malicious activity.
To that end, a distributed system offers two immediate mechanisms: \emph{replication} and \emph{error correcting codes}~\cite{LCL04,RB05,WK02}.
In replication, multiple copies of each data item are stored in multiple distinct servers which are assumed to exhibit independent failure pattern.
This way, if one server is down, or its storage has been erased, the data can still be retrieved from a replica.
This ensures fast recovery, at the cost of multiplying the storage overhead by the replication factor.

In error correction, each collection of $p$ data items is encoded into $p+q$ chunks, which are spread over $p+q$ distinct, failure independent, servers.
This way, when needed, a stored data item can be retrieved from any $p$ out of the corresponding $p+q$ servers that collectively encoded the missing data item.
This reduces the storage cost, but increases the recovery time and network overhead during data recovery.

Hence, replication and error correcting codes present a trade-off, whose right balance point depends on the applications requirements, network bandwidth, storage costs, and more~\cite{WK02}.
The \textit{expected data persistency} as well as the maintenance costs of any of the above systems may be analyzed based on the choice of the replication degree $r$ or the choice of $p$ and $q$, as the case may be, together with the other systems characteristics~\cite{LazySmart}.

In order to avoid the above dichotomy and offer a more flexible design, \textbf{replicated erasure codes} have been proposed~\cite{tech_kantor}.
In the latter, each of the $p+q$ chunks may be replicated $r$ times.
In this paper, we investigate the fundamental problem of \textit{expected data persistency} of \textbf{replicated erasure codes} when the storage nodes permanently leave the system and erase their locally stored data.
This study is related and motivated by the work of \cite{tech_kantor}. 
Namely, \textbf{the general family} of redundancy schemes for distributed storage systems, called replicated erasure codes
was introduced in~\cite{tech_kantor}. %, the authors introduced a new family of redundancy schemes for distributed storage systems, called \textbf{replicated erasure codes}.
\subsection{Preliminaries and Notation}
\label{sec:preliminaries}
Formally, we recall the definition of \textit{replicated erasure codes}.% (see \cite{tech_kantor}).
\begin{definition}[replicated erasure codes $REC(p,p+q,r)$]
\label{definition:first} Let $p\in\mathbb{N}\setminus\{0\},$ $q\in\mathbb{N}$ and $r\in\mathbb{N}\setminus\{0\}$ be fixed.
\begin{itemize}
\item The original file $f$ is divided into $p$ chunks and then encoded into $p+q$ chunks;
$f_1(1), f_2(1),$ $\dots f_{p+q}(1)$ chunks after encoding.
\item Each of the $p+q$ chunks is replicated into $r\ge 1$ replicas such that:\\
$f_1(1), f_1(2),\dots f_{1}(r)$ are replicas of the chunk $f_1(1)$,\\
$f_2(1), f_2(2),\dots f_{2}(r)$ are replicas of the chunk $f_2(1)$,\\ 
$\dots$\\
$f_{p+q}(1), f_{p+q}(2),\dots f_{p+q}(r)$ are replicas of the chunk $f_{p+q}(1).$
\item The total $r\cdot (p+q)$ chunks are spread according to some \textbf{replication strategy}
%independently and randomly
among storage nodes.
\end{itemize}
%We call $\frac{r}{r+s}$ the rate of encoding and $\frac{r+s}{s}$ is the storage cost.
%\end{Assumption}
\end{definition}
The purpose of this paper is the analysis of two replication strategies: \textbf{random}~\cite{peertopeer} and \textbf{symmetric}~\cite{symmetric}. 
The \textbf{random strategy} is simple: all chunks are spread randomly among storage nodes.
In the \textbf{symmetric strategy} nodes in the system are split into groups of size $(p+q)\cdot r$
in which every node keeps chunks of all data stored at other nodes in the same group
(The precise meaning of the mentioned strategies
will be explained further in Algorithm~\ref{alg_rand} in Section~\ref{section:random}
and in Algorithm~\ref{alg_symnew} in Section~\ref{sec:symmetric}).

Also, we recall the properties of restoring an original data item that is divided and encoded according to erasure codes in terms of the replication multisets.
\begin{assumption}[properties of erasure codes $REC(p,p+q,r)$]
\label{assumption:first}
We call the multisets\\ $U_1=\{f_1(1), f_1(2),\dots f_{1}(r)\},$  $U_2=\{f_2(1), f_2(2),\dots f_{2}(r)\},$ 
 $\dots$\\ $U_{p+q}=\{f_{p+q}(1), f_{p+q}(2),\dots f_{p+q}(r)\}$  the replication multisets.
To restore the original file, one has to download $p$ \textbf{chunks}, each from a \textbf{different replication multiset.}
\end{assumption}
Figure~\ref{fig.multiset} illustrates Assumption~\ref{assumption:first} for system parameters $p=1,$ $q=2$ and $r=4$.
Observe that, in this special case, to restore the original file $f$, one has to download two fragments $f_1(i),$ $f_2(j)$
or $f_1(i),$ $f_3(j)$ or $f_2(i),$ $f_3(j)$ for some $i,j\in\{1,2,3,4\}$.
\begin{figure*}[ht]
%\label{fig.1as}
\begin{center}
 \begin{tabular}{|c|cccc|} 
 %\label{tab:main}
%   \caption{Nine mobile sensors located in the interrior of a cube move to new positions according to steps $(1-6)$ of Algorithm 1.}
 %\end{tabular}
 \hline
 $U_1$ & $f_1(1)$  & $f_1(2)$ & $f_1(3)$ & $f_1(4)$  \\ [0.5ex]
 \hline
 $U_2$ & $f_2(1)$  & $f_2(2)$ & $f_2(3)$ & $f_2(4)$  \\ [0.5ex]
 \hline
 $U_3$ & $f_3(1)$  & $f_3(2)$ & $f_3(3)$ & $f_4(4)$  \\ [0.5ex]
 \hline
 \end{tabular}
 \end{center}
  \caption{The replication multisets $U_1,$ $U_2,$ $U_3$ for the 
  replicated erasure codes $REC(1,3,4),$ i.e. the 
  system parameters $p=1,$ $q=2$ and $r=4.$} 
  \label{fig.multiset}
\end{figure*}
%%%%%%%%%%%%%%%%%%%%%%%%%%%%%%%%%%%%%%%%%%%%%%%%%%%%%%%
%%%%%%%%%%%%%%%%%%%%%%%%%%%%%%%%%%%%%%%%%%%%%%%%%%%%%%%
%\input RelatedWork.tex
%\section{Related Work}
%\label{sec:RelatedWork}
%In this paper we focus on the problem of \textbf{data persistency} of \textit{replicated erasure codes.}

We are now ready to formulate the main problem of \textbf{data persistency} of \textit{replicated erasure codes}
that is investigated in this paper.
 
\begin{assumption}[data persistency of ($REC(p,p+q,r)$)] 
\label{assumption:persistency}
Let $N$ be the number of nodes in the storage system and let $D$ be the number of documents.
\begin{itemize}
 \item[(i)] Assume that each document is replicated
according to replicated erasure codes (see Definition \ref{definition:first} and Assumption \ref{assumption:first}).
\item[(ii)] Suppose that nodes successively, independently at random leave the storage system and erase the data that they stored without notice.
\end{itemize}
Let $X_{(p,p+q,r)}^{(N,D)}$ denote the number of nodes which must be removed so that \textbf{restoring some document is no longer possible}.
The data persistency of $REC(p,p+q,r)$ is the random variable $X_{(p,p+q,r)}^{(N,D)}.$
\end{assumption}

Our main goal is to study the expected value of the random variable $X_{(p,p+q,r)}^{(N,D)}$ as a function of the parameters $p,q,r,N,D$.
Specifically, we derive a closed analytical formula for $\E{X_{(p,p+q,r)}^{(N,D)}}$ under two replication strategies: \textbf{random} and \textbf{symmetric},
and explain tradeoffs arising among the parameters $p,q,r,N,D$.

\subsection{Contributions and Outline of this Paper}
The main contributions of this work include deriving closed form and asymptotic formulas for the persistency of distributed storage systems that employ the replicated erasure coding scheme using the random and symmetric placement strategies.
For the specific case that the chosen parameters provide maximal persistency, we give an exact formula.
Our formulas are expressed in terms of Beta functions. 

More specifically, consider a distributed storage system consisting of $N$ nodes and $D$ documents, where each document is replicated according to replicated erasure codes.
That is, fix $p\in\mathbb{N}\setminus\{0\}$, $q\in\mathbb{N}$ and $r\in\mathbb{N}\setminus\{0\}$.
Each document is divided into $p$ chunks and then encoded into $p+q$ chunks. 
Each of the $p+q$ chunks are replicated into $r$ replicas (see Definition \ref{definition:first}).
All chunks are spread in the storage nodes according to either the random or the symmetric strategy (see Algorithm~\ref{alg_rand} in Section~\ref{section:random} and Algorithm~\ref{alg_symnew} in Section~\ref{sec:symmetric}).

We further assume that nodes successively, independently at random leave the storage system and erase the data that they stored without notice.
The objective is to derive the formula for the expected data persistency of replicated erasure codes in the storage system, i.e.,  the number of nodes in the storage system which must be removed so that restoring some document is no longer possible.
To this aim, we make the following two novel contributions:
\begin{itemize}
\item We derive for \textbf{random} and \textbf{symmetric} replication strategies closed form integral formula for the expected data persistency in terms of incomplete Beta function
 (see Theorem \ref{thm:mainbex} in Section \ref{section:random} and Theorem \ref{thm:mainasymerfirst} in Section \ref{sec:symmetric})
 and asymptotics formulas (see Theorem \ref{thm:mainbd} in Section \ref{section:random} and Theorem \ref{thm:mainasymer} in Section \ref{sec:symmetric}).
 \item When the maximal expected data persistency of replicated erasure codes 
for \textbf{both replication strategies} is attained by the replicated erasure codes with parameter $p=1$
(see Corollary \ref{thm:mainbdef} in Section \ref{section:random} 
and Corollary \ref{thm:mainbdefabc} in Section \ref{sec:symmetric}) 
\footnote{Notice that for $p=1$ replicated erasure codes becomes
simply replication with $(1+q)r$ replicas.}
we give exact formulas for the expected data persistency in terms of Beta Function (see Subsection \ref{sub:exactly}).
\end{itemize}
Table \ref{tab:mainaab} summarizes the results proved in this paper (see the Landau asymptotic notation in Section \ref{sec:preliminar}).
Let us recall that to restore the original file, one has to download $p$ chunks, each from $r$ different replication sets (see Assumption \ref{assumption:persistency}).
Thus, if $q+1$ chunks in each $r$ replication sets are distributed in the nodes that leave the storage system then restoring the original file is not possible. 
For the random replication strategy, the expected data persistency is in $N\Theta\left(D^{-\frac{1}{r(q+1)}}\right)$.
If the documents are replicated according to symmetric strategy the expected data persistency is in $\Theta\left(N^{1-\frac{1}{r(q+1)}}\right)$.

\begin {table*}[t]
 %\label{tab:main}
\caption {The expected data persistency of replicated erasure codes as a function of the replication parameters $p,q,r,$ the number of nodes $N$ and the number of documents $D.$}
%\footnote{We note that in view of Theorem \ref{thm:mainbd} the expected data persistency doesn't not depend on the parameter $p.$}
\label{tab:minaaa} 
\begin{center}
 \label{tab:mainaab}
 %\caption{Nine mobile sensors located in the interrior of a cube move to new positions according to steps $(1-6)$ of Algorithm 1.}
 \begin{tabular}{|*{4}{c|}} 
 %\label{tab:main}
%   \caption{Nine mobile sensors located in the interrior of a cube move to new positions according to steps $(1-6)$ of Algorithm 1.}
 %\end{tabular}
 \hline
 Replication strategy  & $\E{X_{(p,p+q,r)}^{(N,D)}}$ & Algorithm  & Theorems \\ [0.5ex]
    \hline
 random & $N\Theta\left(D^{-\frac{1}{r(q+1)}}\right)$  & \ref{alg_rand} &    \ref{thm:mainbex}, \ref{thm:mainbd} \\ [0.5ex]
 \hline
symmetric  & $\Theta\left(N^{1-\frac{1}{r(q+1)}}\right)$ & \ref{alg_symnew} &  \ref{thm:mainasymerfirst}, \ref{thm:mainasymer} \\  [0.5ex]
 \hline
 \end{tabular}
\end{center}
\end{table*}
%\vspace{-1cm}

The rest of the paper is organized as follows. Section \ref{sec:preliminar}  provides several preliminary results that will be used in the sequel.
In Section \ref{section:random} we derive the main results on the expected data persistency of replicated erasure codes for the random strategy.
Section \ref{sec:symmetric} deals with the main results on the expected data persistency of replicated erasure codes for the symmetric strategy.
Section \ref{sec:discussions}  presents further insights including \textbf{comparision of the random and symmetric placement strategies}, \textit{non-uniform redundancy scheme} and \textit{exact formulas} in terms of Beta function. 
The numerical results are presented in Section \ref{sec:experiments}.
Finally, Conclusions are drawn in Section~\ref{sec:conclusions}, where we also outline some future research directions.
%Finally%, for lack of space, 
%certain mathematical background and technical proofs are delayed to the Appendices.
\section{Mathematical Background}
\label{sec:preliminar}
In this section, we recall some basic facts about special functions: Gamma, Beta and incomplete Beta that will be used
in the sequel. We also prove Lemma~\ref{lemma:pochodna1} and Lemma~\ref{lemma:pochodna2} about asymptotic expansions for some function that will be helpful in deriving our main asymptotic results Theorem~\ref{thm:mainbd} in Section~\ref{section:random} and Theorem~\ref{thm:mainasymer} in
Section~\ref{sec:symmetric}. 
Finally, we recall Formula~(\ref{eq:random_variable}) for discrete nonnegative random variable and the Landau asymptotic
notation.
The Gamma function is defined via a convergent improper integral
$$
\Gamma(z)=\int_0^{\infty}t^{z-1}e^{-t}dt,\,\,\,\,z>0.
$$
If $n$ is an integer, the Gamma function is related to factorial
$$
\Gamma(n+1)=n!.
$$
The Beta function $\mathrm{Beta}(a,b)$ (see~\cite[Chapter 8]{stat_2011}) is defined for all positive integer numbers $a,b$ via a convergent improper integral
\begin{equation}
\label{eq:first01}
\mathrm{Beta}(a,b)=\int_0^1 t^{a-1}(1-t)^{b-1}dt.
\end{equation}
It is related to the binomial coefficient. Namely, 
\begin{equation}
\label{eq:first02}%
\mathrm{Beta}(a,b)=\frac{1}{\binom{a+b-2}{a-1}(a+b-1)}.
\end{equation}
The incomplete Beta function is defined for all positive integer numbers $a,b$ by the formula
\begin{equation}
\label{eq:incomplete01}
I_x(a,b)=\frac{1}{\mathrm{Beta}(a,b)}\int_0^x t^{a-1}(1-t)^{b-1}dt,\,\,\, \text{when}\,\,0\le x\le 1.
\end{equation}
The following Identity about incomplete Beta function is known and can be proved using integration by parts.  
\begin{equation}
\label{eq:incomplete02}
1-I_x(a,b)=\sum_{j=0}^{a-1}\binom{a+b-1}{j}x^j(1-x)^{a+b-1-j},
\end{equation}
%provided that $a, b$ are \textbf{positive integers}
(see \cite[Identity 8.17.5]{NIST} for $c=m$ and $d=n-m+1$).
\begin{lemma}
\label{lemma:pochodna1}
Assume that 
$p,r\in\mathbb{N}\setminus\{0\},$  $q\in\mathbb{N}$ are fixed and independent on $n.$
Consider the following function 
$$f(x)=- \ln\left(1-I_{x^r}(q+1,p)\right),$$ 
where $I_{x^r}(q+1,p)$ is the incomplete Beta function.
%for $x\in[0,1].$
Then 
\begin{enumerate}
\item[(i)] $f(x)>f(0)$ for all $x\in(0,1),$ and for every $\delta>0$ the infimum of $f(x)-f(0)$ in $[\delta,1)$ is positive,
\item[(ii)] $f(x)$ has as $x\rightarrow 0^+$ the following expansion
$$\sum_{k=0}^{\infty}a_kx^{k+r(q+1)},$$
where $a_0=\binom{p+q}{q+1}$ and the expansion of $f$ is term-wise differentiated, that is
$f^{'}(x)=\sum_{k=0}^{\infty}a_k(k+r(q+1))x^{k+r(q+1)-1}.$
\end{enumerate}
\end{lemma}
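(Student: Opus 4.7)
The plan is to unpack $I_{x^r}(q+1,p)$ as an explicit polynomial in $x$ via~(\ref{eq:incomplete01}) and then exploit the analyticity of $-\ln(1-z)$ near $z=0$. First I would expand $(1-t)^{p-1}$ inside the integral in~(\ref{eq:incomplete01}) to obtain
\begin{equation*}
I_{x^r}(q+1,p) \;=\; \frac{1}{\mathrm{Beta}(q+1,p)} \sum_{k=0}^{p-1} \binom{p-1}{k} \frac{(-1)^k}{q+1+k}\, x^{r(q+1+k)}.
\end{equation*}
Thus $I_{x^r}(q+1,p)$ is a polynomial in $x$ with no constant term whose lowest-order monomial, using $\mathrm{Beta}(q+1,p) = q!\,(p-1)!/(p+q)!$, equals $\binom{p+q}{q+1}\, x^{r(q+1)}$.

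For part~(i): clearly $f(0) = -\ln 1 = 0$. For any $x\in(0,1)$ the integrand $t^q(1-t)^{p-1}$ in~(\ref{eq:incomplete01}) is strictly positive on the interval $(0,x^r)$, so $I_{x^r}(q+1,p)>0$ and hence $f(x)>f(0)$. Moreover the map $y\mapsto I_y(q+1,p)$ has strictly positive derivative $y^q(1-y)^{p-1}/\mathrm{Beta}(q+1,p)$ on $(0,1)$, so $x\mapsto 1-I_{x^r}(q+1,p)$ is strictly decreasing on $(0,1)$, and consequently $f$ is strictly increasing there. Therefore, for any $\delta>0$,
\begin{equation*}
\inf_{x\in[\delta,1)}\bigl(f(x)-f(0)\bigr) \;=\; f(\delta) \;>\; 0.
\end{equation*}

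For part~(ii): since $I_{x^r}(q+1,p)$ is a polynomial in $x$ (hence an entire function) that vanishes at $x=0$, and since $-\ln(1-z)$ is analytic in $|z|<1$, the composition $f(x)=-\ln(1-I_{x^r}(q+1,p))$ is analytic in a neighborhood of $x=0$ and therefore admits a convergent Taylor expansion there. Substituting $I_{x^r}(q+1,p) = \binom{p+q}{q+1}x^{r(q+1)} + O(x^{r(q+1)+1})$ into $-\ln(1-z)=\sum_{m\ge 1} z^m/m$, the power $z^m$ contributes only to orders $\ge m\,r(q+1)$, so the coefficient of $x^{r(q+1)}$ comes exclusively from the $m=1$ term and equals $\binom{p+q}{q+1}$; this gives $a_0=\binom{p+q}{q+1}$. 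Term-wise differentiability of the series then follows from the standard fact that convergent power series may be differentiated inside their disc of convergence.

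The only delicate bookkeeping is the identification $\frac{1}{(q+1)\,\mathrm{Beta}(q+1,p)}=\binom{p+q}{q+1}$, which is a one-line factorial check; everything else is routine once analyticity of the composition is in hand. I do not anticipate a substantive obstacle.
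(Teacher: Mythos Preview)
Your proposal is correct and follows essentially the same approach as the paper: monotonicity of $I_{x^r}(q+1,p)$ for part~(i), and composing the Maclaurin series of $-\ln(1-z)$ with the polynomial $I_{x^r}(q+1,p)$ to identify the leading coefficient $\binom{p+q}{q+1}$ for part~(ii). The only cosmetic difference is that you obtain the polynomial form of $I_{x^r}(q+1,p)$ by direct term-by-term integration of $(1-t)^{p-1}$, whereas the paper reaches the same expression via identity~(\ref{eq:incomplete02}) and the binomial theorem; your explicit appeal to analyticity of the composition is, if anything, a cleaner justification of the term-wise differentiability than the paper's more formal derivative computation.
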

\begin{proof} %(Lemma \ref{lemma:pochodna1})
%Applying Identity (\ref{eq:incomplete02}) for $a:=q+1$ and $b:=p$ we have\\
%$f(x)=-\ln\left(1-I_{x^r}(q+1,p)\right).$
Using Formula~(\ref{eq:incomplete01}) we see that the function $I_{x^r}(q+1,p)$ is monotonically increasing over the interval $[0,1]$, giving us the first part of Lemma~\ref{lemma:pochodna1}.

To prove the second part, we apply~(\ref{eq:incomplete02}) for $a:=q+1,$ $b:=p,$ $x:=x^r$ and derive
%$$f(x)=- \ln\left(1-\sum_{j=q+1}^{p+q}\binom{p+q}{j}\left(x^r\right)^j(1-x^r)^{p+q-j}\right).$$ 
$$f(x)=- \ln\left(\sum_{j=0}^{q}\binom{p+q}{j}x^{rj}\left(1-x^r\right)^{p+q-j}\right).$$ 
By the binomial theorem we have
\begin{align*}
f(x)&=- \ln\left(1-\sum_{j=q+1}^{p+q}\binom{p+q}{j}x^{rj}\left(1-x^r\right)^{p+q-j}\right)\\
&=- \ln\left(1-\binom{p+q}{q+1}x^{r(q+1)}-\sum_{j=q+2}^{p+q}a_jx^{rj}\right).
\end{align*}
It is very well known that the natural logarithm has Maclaurin series 
%\begin{equation}
$$%\label{eq:maclar}
-\ln(1-x)=\sum_{k=1}^{\infty}\frac{x^k}{k}.
$$%\end{equation}
Applying this expansion for $x:=\binom{p+q}{q+1}x^{r(q+1)}+\sum_{j=q+2}^{p+q}a_jx^{rj}$ we get
\begin{align*}
&-\ln\left(1-\binom{p+q}{q+1}x^{r(q+1)}-\sum_{j=q+2}^{p+q}a_jx^{rj}\right)\\
&=\sum_{k=1}^{\infty}\frac{\left(\binom{p+q}{q+1}x^{r(q+1)}+\sum_{j=q+2}^{p+q}a_jx^{rj}\right)^k}{k}.
\end{align*}
From this we derive easily
\begin{align*}
f(0)=&f'(0)=f''(0)=\dots=f^{(r(q+1)-1)}(0)=0,\\
&f^{(r(q+1))}(0)= \binom{p+q}{q+1}(r(q+1))!
\end{align*}
Therefore
$$f(x)=\sum_{k=0}^{\infty}a_kx^{k+r(q+1)},\,\,\,\,\,\, f^{'}(x)=\sum_{k=0}^{\infty}a_k(k+r(q+1))x^{k+r(q+1)-1}$$ 
and 
$$
a_0=\frac{f^{(r(q+1))}(0)}{(r(q+1))!}=\binom{p+q}{q+1}.
$$
This is enough to prove Lemma \ref{lemma:pochodna1}.
\end{proof}
\begin{lemma}
\label{lemma:pochodna2}
Assume that 
$p,r\in\mathbb{N}\setminus\{0\},$  $q\in\mathbb{N}$
are fixed and independent on $n.$
Consider the following function 
$$f(x)=-((p+q)r)^{-1} \ln\left(1-\left(I_{x}(q+1,p)\right)^r\right),$$ 
where $I_{x}(q+1,p)$ is the incomplete Beta function.
%for $x\in[0,1].$
Then 
\begin{enumerate}
\item[(i)] $f(x)>f(0)$ for all $x\in(0,1),$ and for every $\delta>0$ the infimum of $f(x)-f(0)$ in $[\delta,1)$ is positive,
\item[(ii)] $f(x)$ has as $x\rightarrow 0^+$ the following expansion
$$\sum_{k=0}^{\infty}a_kx^{k+r(q+1)},$$
where $a_0=((p+q)r)^{-1}\binom{p+q}{q+1}^r$ and the expansion of $f$ is term-wise differentiated, that is
$f^{'}(x)=\sum_{k=0}^{\infty}a_k(k+r(q+1))x^{k+r(q+1)-1}.$
\end{enumerate}
\end{lemma}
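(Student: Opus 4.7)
The plan is to follow the template of the proof of Lemma~\ref{lemma:pochodna1} essentially verbatim: the only structural change is that $I_{x^r}(q+1,p)$ is replaced by $(I_x(q+1,p))^r$ and the whole logarithm is rescaled by $((p+q)r)^{-1}$, so I expect the same monotonicity argument for part (i) and the same Maclaurin-series unpacking for part (ii), with only the leading coefficient of the expansion shifted.

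For part (i) I would invoke Formula~(\ref{eq:incomplete01}) to see that $I_x(q+1,p)$ is a strictly increasing continuous function on $[0,1]$ taking the values $0$ and $1$ at the endpoints. Raising to the positive integer power $r$ preserves strict monotonicity, so $1-(I_x(q+1,p))^r$ strictly decreases from $1$ down to $0$, hence $-\ln(1-(I_x(q+1,p))^r)$ is strictly increasing on $[0,1)$. Dividing by the positive constant $(p+q)r$ leaves monotonicity intact, so $f(x)>f(0)=0$ on $(0,1)$ and $\inf_{[\delta,1)}(f-f(0))=f(\delta)>0$ for every $\delta>0$.

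For part (ii) the first step is to show that $I_x(q+1,p)$ is itself a polynomial in $x$ of degree $p+q$, with the expansion $I_x(q+1,p)=\binom{p+q}{q+1}x^{q+1}+O(x^{q+2})$ near $x=0$. This follows either by expanding $(1-t)^{p-1}$ in the integral~(\ref{eq:incomplete01}) and integrating term by term, or from Formula~(\ref{eq:incomplete02}) combined with the binomial identity $\sum_{j=0}^{p+q}\binom{p+q}{j}x^j(1-x)^{p+q-j}=1$ and the relation~(\ref{eq:first02}). Raising to the $r$-th power, $P(x):=(I_x(q+1,p))^r$ is then a polynomial of degree $r(p+q)$ satisfying $P(0)=0$ and $P(x)=\binom{p+q}{q+1}^r x^{r(q+1)}+O(x^{r(q+1)+1})$.

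Finally, I would substitute $P(x)$ into the Maclaurin series $-\ln(1-u)=\sum_{k\ge 1}u^k/k$, valid for $|u|<1$. Since $P(0)=0$, this substitution produces a power series in $x$ convergent in a neighbourhood of the origin, whose lowest-degree term is $\binom{p+q}{q+1}^r x^{r(q+1)}$ and all of whose exponents are at least $r(q+1)$. Dividing by $(p+q)r$ yields the claimed expansion with $a_0=((p+q)r)^{-1}\binom{p+q}{q+1}^r$, and term-wise differentiation is justified because a power series can always be differentiated term by term inside its disk of convergence. I do not anticipate a serious obstacle; the only mildly delicate point is the legitimacy of substituting one power series into another, which is a standard fact and requires nothing more than $P(0)=0$.
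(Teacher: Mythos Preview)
Your proposal is correct and follows essentially the same route as the paper: monotonicity of $I_x(q+1,p)$ for part~(i), then Formula~(\ref{eq:incomplete02}) plus the binomial identity to write $I_x(q+1,p)$ as a polynomial with lowest term $\binom{p+q}{q+1}x^{q+1}$, raise to the $r$-th power, and substitute into the Maclaurin series of $-\ln(1-u)$. If anything, you are slightly more explicit than the paper about why the substitution and term-wise differentiation are legitimate.
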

\begin{proof} %(Lemma \ref{lemma:pochodna2}) 
The proof of Lemma~\ref{lemma:pochodna2} is analogous to that of Lemma~\ref{lemma:pochodna1}.
Using Formula~(\ref{eq:incomplete01}), we see that the function $I_{x}(q+1,p)$ is monotonic increasing over the interval $[0,1]$ and get the first part of Lemma \ref{lemma:pochodna1}.

To prove the second part, we apply (\ref{eq:incomplete02}) for $a:=q+1,$ $b:=p,$ $x:=x$ and derive
$$
f(x)=-((p+q)r)^{-1} \ln\left(1-\left(1-\sum_{j=0}^{q}\binom{p+q}{j}x^{j}\left(1-x\right)^{p+q-j}\right)^r\right).
$$
By the binomial theorem we have
\begin{align*}
f(x)&=-((p+q)r)^{-1} \ln\left(1-\left(\sum_{j=q+1}^{p+q}\binom{p+q}{j}x^{j}\left(1-x\right)^{p+q-j}\right)^r\right)\\
&=-((p+q)r)^{-1} \ln\left(1-\binom{p+q}{q+1}^rx^{r(q+1)}-\sum_{j=q+2}^{p+q}a_jx^{rj}\right).
\end{align*}
Let us recall that the natural logarithm has Maclaurin series 
$$
%\label{eq:maclar01}
-\ln(1-x)=\sum_{k=1}^{\infty}\frac{x^k}{k}.
$$
Using this for $x:=\binom{p+q}{q+1}^rx^{r(q+1)}+\sum_{j=q+2}^{p+q}a_jx^{rj}$ we get
\begin{align*}
&-\ln\left(1-\binom{p+q}{q+1}^rx^{r(q+1)}-\sum_{j=q+2}^{p+q}a_jx^{rj}\right)\\
&=\sum_{k=1}^{\infty}\frac{\left(\binom{p+q}{q+1}^rx^{r(q+1)}+\sum_{j=q+2}^{p+q}a_jx^{rj}\right)^k}{k}.
\end{align*}
From this we derive easily
\begin{align*}
f(0)=&f'(0)=f''(0)=\dots=f^{(r(q+1)-1)}(0)=0,\\
&f^{(r(q+1))}(0)=((p+q)r)^{-1}\binom{p+q}{q+1}^r(r(q+1))!
\end{align*}
Therefore
$$f(x)=\sum_{k=0}^{\infty}a_kx^{k+r(q+1)},\,\,\,\,\,\, f^{'}(x)=\sum_{k=0}^{\infty}a_k(k+r(q+1))x^{k+r(q+1)-1}$$ 
and 
$$
a_0=\frac{f^{(r(q+1))}(0)}{(r(q+1))!}=((p+q)r)^{-1}\binom{p+q}{q+1}^r.
$$
This is enough to prove Lemma \ref{lemma:pochodna2}.
\end{proof}
Assume that $X$ is a discrete nonnegative random variable. Then
\begin{equation}
\label{eq:random_variable}
\E{X}=\sum_{l\ge 0}\Pr[X>l]
\end{equation}
(see \cite[Exercise 2.1]{Szpankowski2001}).

We recall the following asymptotic notation:\\
$(a)$ $f(n)=O(g(n))$ if there exists a constant $C_1>0$ and integer $N$ such that $|f(n)|\le C_1|g(n)|$ for all $n>N,$\\
$(b)$ $f(n)=\Omega(g(n))$ if there exists a constant $C_2>0$ and integer $N$ such that $|f(n)|\ge C_2|g(n)|$ for all $n>N,$\\
$(c)$ $f(n)=\Theta(g(n))$ if and only if $f(n)=O(g(n))$ and $f(n)=\Omega(g(n)).$
\section{Analysis of Random Replicated Erasure Codes }
\label{section:random}
In this section, we investigate the expected data persistency of replicated erasure codes for the \textbf{random replication strategy.}
Namely, we analyse Algorithm~\ref{alg_rand}. 
It is worth pointing out that while Algorithm~\ref{alg_rand} is very simple, its analysis is not completely trivial.
\begin{algorithm}[tb]
\caption{Random replication of $REC(p,p+q,r)$}
\label{alg_rand}
\begin{algorithmic}[1]
 \REQUIRE A system of $N$ storage nodes; $f^{(1)}, f^{(2)},\dots f^{(D)}$ data items; each of the data items is encoded and replicated into $(p+q)r$ chunks (see Definition \ref{definition:first});
 $f^{(1)}_1(j),f^{(1)}_2(j),\dots f^{(1)}_{p+q}(j),$ $f^{(2)}_1(j),f^{(2)}_2(j),\dots f^{(2)}_{p+q}(j),\dots$ $f^{(d)}_1(j),f^{(d)}_2(j),\dots f^{(D)}_{p+q}(j),$
 for $j\in\{1,2\dots, r\},$
 fragments after encoding and replication.
 %($X_1\le X_2\le\dots \le X_n$).
 %$n$ mobile sensors with identical sensing radius $r=\frac{y}{2n}$ placed uniformly and independently at random on the interval $[0,y].$
 \ENSURE  The final random and independent positions of $(p+q)rD$ fragments in the nodes of the storage system (see Figure $2$).
 %between the origin and the location of the latest sensor
%consecutive sensors 
% \STATE{sort the initial locations of sensors; the locations after sorting $x_1,x_2,\dots x_n,$  $x_1\le x_2\le\dots \le x_n.$}
 \FOR{$k=1$  \TO $D$ } 
  \FOR{$j=1$  \TO $r$ } 
   \FOR{$l=1$  \TO $p+q$ } 
% \STATE{ 
 %\FOR{$i=1$  \TO $\sqrt{n}$ } 
  \STATE{choose randomly and independently node in the storage system;}
 \STATE{replicate fragment $f^{(k)}_{l}(j)$ at the chosen node;}
 %\ENDFOR
% } 
\ENDFOR
 \ENDFOR
 \ENDFOR
\end{algorithmic}
\end{algorithm}
Figure $2$ illustrates Algorithm \ref{alg_rand} for the system parameters $p=q=1$, $r=2$, $D=6$ and $N=8$.
In this case, each document is divided and encoded into $4$ chunks. 
Hence, we have $24$ chunks spread randomly and independently in the $8$ nodes.
\begin{figure*}[b]
%\label{fig.1as}
  $$
\begin{array}{|c|c|c|c|c|c|c|c|c|c|}
\hline
 n & 1 & 2 & 3  & 4 & 5 & 6 & 7 & 8  \\
 \hline
 &f^{(1)}_1(1) & f^{(2)}_1(1) & f^{(1)}_2(1)  & f^{(1)}_1(2) & f^{(1)}_2(2) & f^{(2)}_2(1) 
  & f^{(2)}_1(2) 
 & f^{(2)}_2(2) \\
 
 &f^{(6)}_1(1)  & f^{(3)}_1(1) & f^{(4)}_1(2)  & f^{(3)}_2(2) & f^{(4)}_2(1) & f^{(4)}_1(1) 
  & f^{(3)}_2(1) 
 & f^{(3)}_1(2) \\
 
  & &  &   & f^{(4)}_2(2) & f^{(5)}_1(1) &  f^{(6)}_1(2)
  & f^{(5)}_1(2)
 & \\
 
  & &  &   &  & f^{(5)}_2(1) &  f^{(6)}_2(1)
  &  
 & \\
 
  & &  &   &  & f^{(5)}_2(2) &  
  & 
 &  \\
 
   & &  &   &  & f^{(6)}_2(2) &  
  & 
 &  \\
 \hline
\end{array}
$$ 
  \caption{Random replicated strategy according to Algorithm \ref{alg_rand} for
  replicated erasure codes $REC(1,2,2),$ i.e. the 
  system parameters $p=q=1,$ $r=2,$ the number of documents $d=6$ and the number of nodes $n=8.$} 
  \label{fig.1as}
\end{figure*}

We now prove the following exact formula for the expected data persistency of replicated erasure codes with the random replication strategy. 
In the proof of Corollary~\ref{thm:random_a}, %(Appendix~\ref{sec:proofB}), 
we derive $\Pr\left[X_{(p,p+q,r)}^{(N,D)}>l\right]$ (see Equation~(\ref{eq:probas}))
and apply Identity~(\ref{eq:random_variable}) to get the desired exact formula as the sum of incomplete Beta functions.

\begin{corollary}
\label{thm:random_a}
Let Assumption \ref{assumption:persistency} hold. 
Assume that all chunks are spread randomly and independently in the storage nodes as in Algorithm~\ref{alg_rand}.
Then
$$%\begin{equation}
%\label{eq:random_a}
\E{X_{(p,p+q,r)}^{(N,D)}}=\sum_{l=0}^{N}\left(1-I_{\left(\frac{l}{N}\right)^r}(q+1,p)\right)^{D},
$$%\end{equation}
where $I_{\left(\frac{l}{N}\right)^r}(q+1,p)$ is the incomplete Beta function.
\end{corollary}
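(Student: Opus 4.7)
The plan is to apply the tail-sum identity (\ref{eq:random_variable}) to the nonnegative integer-valued random variable $X_{(p,p+q,r)}^{(N,D)}$, so the task reduces to computing $\Pr\left[X_{(p,p+q,r)}^{(N,D)}>l\right]$ for each $l\in\{0,1,\dots,N\}$. This event says that even after $l$ nodes have been erased, every one of the $D$ documents can still be restored. By symmetry of Algorithm~\ref{alg_rand} under the uniform-random removal order, the set of already-erased nodes after $l$ steps is distributed as a uniformly random $l$-subset $S$ of the $N$ nodes, independently of the chunk placements.

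Next I would zoom in on a single fixed document and compute the probability that it remains restorable. By Assumption~\ref{assumption:first}, the document is lost if and only if fewer than $p$ of its $p+q$ replication multisets survive, equivalently at least $q+1$ of them are entirely contained in $S$. A given replication multiset $U_i$ consists of $r$ chunks, each placed independently and uniformly at random in $\{1,\dots,N\}$; hence the event $U_i\subseteq S$ has probability $(l/N)^r$, and these events are independent across $i=1,\dots,p+q$ because all $(p+q)r$ placements for the document are independent. So the number of completely lost multisets is $\mathrm{Binomial}(p+q,(l/N)^r)$, and the document is restorable with probability
\[
\sum_{j=0}^{q}\binom{p+q}{j}\left(\tfrac{l}{N}\right)^{rj}\left(1-\left(\tfrac{l}{N}\right)^r\right)^{p+q-j}
= 1-I_{(l/N)^r}(q+1,p),
\]
where the equality is exactly Identity~(\ref{eq:incomplete02}) with $a=q+1$, $b=p$, $x=(l/N)^r$.

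Since the $D$ documents are independently replicated and placed by Algorithm~\ref{alg_rand}, conditional on $S$ the survival events of distinct documents are independent, so
\[
\Pr\left[X_{(p,p+q,r)}^{(N,D)}>l\right]=\left(1-I_{(l/N)^r}(q+1,p)\right)^{D}.
\tag{\theequation}\label{eq:probas}
\]
Finally, I would invoke (\ref{eq:random_variable}); the summand vanishes for $l\geq N$ because $I_1(q+1,p)=1$, so truncating the tail sum at $l=N$ yields the claimed closed form.

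The only real subtlety — and thus the step I would be most careful about — is justifying the two independence claims used above: independence of the placements from the removal process (handled by the symmetry argument that makes $S$ a uniform random $l$-subset), and the cross-document independence needed to obtain the $D$-th power. Both are immediate from Algorithm~\ref{alg_rand}, but they are the conceptual heart of the calculation; everything else is a direct application of the binomial-to-incomplete-Beta identity already stated in Section~\ref{sec:preliminar}.
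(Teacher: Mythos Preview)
Your proposal is correct and follows essentially the same route as the paper: both compute $\Pr[X_{(p,p+q,r)}^{(N,D)}>l]$ by observing that each replication multiset is entirely erased with probability $(l/N)^r$, so the number of lost multisets is $\mathrm{Binomial}(p+q,(l/N)^r)$, then raise the single-document survival probability to the $D$-th power by independence, rewrite via Identity~(\ref{eq:incomplete02}), and conclude with the tail-sum formula~(\ref{eq:random_variable}). Your version is slightly more explicit about the two independence claims (uniformity of the erased set $S$ and cross-document independence), but the argument is otherwise identical to the paper's.
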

%\section{Proof of Corollary \ref{thm:random_a}}
%\label{sec:proofB}
\begin{proof} %(Corollary \ref{thm:random_a})
 Let $N$ be the number of nodes  and let $D$ be the number of documents in the storage system.
Assume that nodes successively, independently at random leave the storage system and erase the data that they stored without notice.

Let us recall that random variable $X_{(p,p+q,r)}^{(N,D)}$ denotes the number of nodes which must be removed so that restoring some document becomes impossible.
Fix $i\in\{1,2,\dots,p+q\}$.
Assume that the set of nodes of cardinality $l$ are all removed.
Since the $l$ nodes successively, independently at random leave the storage system, the probability that we remove all chunks from the replication set $U_i$ is equal to $\left(\frac{l}{N}\right)^r$. 
%Hence, for every replication set $U_i$ the probability the we remove all chunks is also equal to $\left(\frac{l}{N}\right)^r.$ 
%The same is true for all replication sets.
Notice that to restore the fixed file $f$, one has to download $p$ chunks, each from a different replication multiset, i.e., at most $q$ replication multisets can be removed
(see Assumption \ref{assumption:first}).
Therefore, the probability for restoring the fixed file $f$, when $l$ nodes successively, independently at random leave the storage system the storage system is equal to
$$
\sum_{j=0}^{q}\binom{p+q}{j}\left(\left(\frac{l}{N}\right)^r\right)^j\left(1-\left(\frac{l}{N}\right)^r\right)^{p+q-j} .
$$
Since each of $D$ documents are transformed into chunks (see Definition~\ref{definition:first}) and all the
chunks are spread independently and randomly among storage nodes, we have
\begin{equation}
\label{eq:probas}
\Pr\left[X_{(p,p+q,r)}^{(N,D)}>l\right]=\left(\sum_{j=0}^{q}\binom{p+q}{j}\left(\left(\frac{l}{N}\right)^r\right)^j\left(1-\left(\frac{l}{N}\right)^r\right)^{p+q-j}\right)^{D} .
\end{equation}
Combining together Equation~(\ref{eq:random_variable}) for random variable $X:=X_{(p,p+q,r)}^{(N,D)}$, Equation~(\ref{eq:probas}), as well as $\Pr[X_{(p,p+q,r)}^{(N,D)}> l]=0$ for any $l\ge N+1$,
we get
$$\E{X_{(p,p+q,r)}^{(N,D)}}=\sum_{l=0}^{N}\left(\sum_{j=0}^{q}\binom{p+q}{j}\left(\left(\frac{l}{N}\right)^r\right)^j\left(1-\left(\frac{l}{N}\right)^r\right)^{p+q-j}\right)^{D} .$$
Applying Identity~(\ref{eq:incomplete02}) for $a:=q+1,$ $b:=p$ we have the desired Formula
$$\E{X_{(p,p+q,r)}^{(N,D)}}=\sum_{l=0}^{N}\left(1-I_{\left(\frac{l}{N}\right)^r}(q+1,p)\right)^{D} .$$
This completes the proof of Corollary \ref{thm:random_a}.
\end{proof}
We now observe that the maximal expected data persistency of replicated erasure codes 
%In this section we investigate the expected data persistency of replicated ersure codes 
for the random replication strategy
is attained by the replicated erasure codes with parameter $p=1$.
Namely, we combine together the result of previous Corollary \ref{thm:random_a}
with well known Identity~\cite[Identity 8.17.21]{NIST} for the incomplete Beta function to obtain Corollary~\ref{thm:mainbdef}. %, whose proof is delayed to %Appendix~\ref{sec:proofC}.
\begin{corollary}
\label{thm:mainbdef}
Under the assumption of Corollary \ref{thm:random_a} we have
$$%\begin{equation}
%\label{Emultced}
 \max_{p\ge 1} \E{X_{(p,p+q,r)}^{(N,D)}} =\E{X_{(1,1+q,r)}^{(N,D)}}.
$$%\end{equation}
\end{corollary}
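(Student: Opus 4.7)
The plan is to use Corollary~\ref{thm:random_a} to reduce the maximization over $p$ to a monotonicity statement for the incomplete Beta function in its second argument. By Corollary~\ref{thm:random_a},
\[
\E{X_{(p,p+q,r)}^{(N,D)}}=\sum_{l=0}^{N}\left(1-I_{(l/N)^r}(q+1,p)\right)^{D},
\]
so it suffices to show that, for every fixed $x\in[0,1]$ and fixed $a:=q+1\ge 1$, the quantity $I_x(a,p)$ is non-decreasing in $p\ge 1$. Once this is established, $1-I_x(q+1,p)\in[0,1]$ is non-increasing in $p$; raising to the $D$-th power preserves this (since the base lies in $[0,1]$ and $D\ge 0$), and summing termwise over $l=0,1,\dots,N$ preserves it as well, so the sum is maximized at $p=1$.

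The key technical step is the monotonicity $I_x(q+1,p+1)\ge I_x(q+1,p)$. For this I would invoke the cited recurrence \cite[Identity 8.17.21]{NIST}, which (after the symmetry $I_x(a,b)=1-I_{1-x}(b,a)$) yields an identity of the form
\[
I_x(a,p+1)-I_x(a,p)=\frac{x^{a}(1-x)^{p}}{p\,\mathrm{B}(a,p)}\ge 0
\]
for positive integers $a,p$ and $x\in[0,1]$. Non-negativity of the right-hand side is immediate, and iterating the recurrence from $p$ down to $1$ gives $I_x(q+1,p)\ge I_x(q+1,1)$, which is exactly what is needed.

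I do not expect a real obstacle here: the argument is essentially one line of algebra plus an appeal to the incomplete Beta recurrence. The only bookkeeping point worth noting is that at the boundary values $x\in\{0,1\}$ the incremental term vanishes, so monotonicity degenerates to equality there, which is harmless since those summands contribute equally to $\E{X_{(p,p+q,r)}^{(N,D)}}$ for every $p$. Putting the pieces together then gives $\E{X_{(p,p+q,r)}^{(N,D)}}\le \E{X_{(1,1+q,r)}^{(N,D)}}$ for all $p\ge 1$, with equality when $p=1$, which is the claim.
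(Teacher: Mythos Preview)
Your proposal is correct and follows essentially the same approach as the paper: reduce via Corollary~\ref{thm:random_a} to the termwise monotonicity of $I_x(q+1,p)$ in $p$, then invoke the incomplete Beta recurrence \cite[Identity 8.17.21]{NIST} to get $I_x(q+1,p+1)-I_x(q+1,p)\ge 0$. The paper writes the recurrence coefficient as $\binom{p+q}{p}$ (via~(\ref{eq:first02})) rather than your equivalent $1/(p\,\mathrm{B}(q+1,p))$, and it does not route through the symmetry $I_x(a,b)=1-I_{1-x}(b,a)$, but the logic is otherwise identical.
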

%\section{Proof of Corollary \ref{thm:mainbdef}}
%\label{sec:proofC}
\begin{proof} %(Corollary \ref{thm:mainbdef})
%Applying Identity (\ref{eq:incomplete02})  
%for  $a:=q+1,$ $b:=p$ and $a:=q+1,$ $b:=p+1$ we have
%\begin{align}
%\nonumber&\sum_{j=0}^{q} {p+q \choose j}x^j(1-x)^{p+q-j}-\sum_{j=0}^{q} {p+q+1 \choose j}x^j(1-x)^{p+q+1-j}\\
%\label{align:a100}&=I_x(q+1,p+1)-I_x(q+1,p),
%\end{align}
%where $I_x$ is incomplete Beta function (see (\ref{eq:incomplete01})).
Applying
\cite[Identity 8.17.21]{NIST} for $a:=q+1$ and $b:=p$, as well as Identity (\ref{eq:first02})
for $a:=q+1,$ and $b:=p$ we have
\begin{equation}
\label{align:b100}
I_x(q+1,p+1)-I_x(q+1,p)=x^{1+q}(1-x)^p{p+q\choose p}\ge 0.
\end{equation}
Using (\ref{align:b100}) for $x:=\left(\frac{l}{N}\right)^r$
we have
$$
1-I_{\left(\frac{l}{N}\right)^r}(q+1,p)\ge 1-I_{\left(\frac{l}{N}\right)^r}(q+1,p+1).
$$
Hence
$$
\sum_{l=0}^{N}\left(1-I_{\left(\frac{l}{N}\right)^r}(q+1,p)\right)^D\ge \sum_{l=0}^{N} \left(1-I_{\left(\frac{l}{N}\right)^r}(q+1,p+1)\right)^D.
$$
Applying Corollary \ref{thm:random_a} we get
$$\E{X_{(p,p+q,r)}^{(N,D)}}\ge \E{X_{(p+1,p+1+q,r)}^{(N,D)}}.$$
Hence 
$$
 \max_{p\ge 1} \E{X_{(p,p+q,r)}^{(N,D)}} =\E{X_{(1,1+q,r)}^{(N,D)}}.
$$
This completes the proof of Corollary \ref{thm:mainbdef}.
\end{proof}
The next theorem provides an explicit integral formula for
$\E{X_{(p,p+q,r)}^{(N,D)}}$ %(the proof is in Appendix~\ref{sec:proofD}). 
%In the proof of Theorem \ref{thm:mainbex} we approximate with integral
\begin{theorem} 
\label{thm:mainbex}
Let Assumption \ref{assumption:persistency} hold. Assume that all chunks are spread randomly
and independently in the storage nodes as in  Algorithm \ref{alg_rand}.
Then
$$%\begin{equation}
%\label{Emultb}
  \E{X_{(p,p+q,r)}^{(N,D)}} =N\int_{0}^{1}\left(1-I_{x^r}(q+1,p)\right)^{D} dx+ER, 
$$%\end{equation}
where $I_{x^r}(q+1,p)$ is the incomplete Beta function and $|ER|\le 1.$
\end{theorem}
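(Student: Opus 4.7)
The plan is to convert the exact sum from Corollary~\ref{thm:random_a} into an integral using the standard Riemann sum comparison for a monotone function, and control the error by the total variation of the integrand on $[0,1]$.

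First, I would set $g(x) = \left(1-I_{x^r}(q+1,p)\right)^{D}$, so that Corollary~\ref{thm:random_a} reads $\E{X_{(p,p+q,r)}^{(N,D)}} = \sum_{l=0}^{N} g(l/N)$. Two observations are crucial:
\begin{itemize}
\item $g$ is monotonically decreasing on $[0,1]$. Indeed, by Formula~(\ref{eq:incomplete01}), the map $y \mapsto I_y(q+1,p)$ is monotonically increasing on $[0,1]$, and $x \mapsto x^r$ is monotonically increasing on $[0,1]$ as well, so $x \mapsto I_{x^r}(q+1,p)$ is increasing, hence $g$ is decreasing.
\item The boundary values are $g(0)=1$ and $g(1)=0$, because $I_0(q+1,p)=0$ and $I_1(q+1,p)=1$.
\end{itemize}

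Next, because $g$ is decreasing on $[0,1]$, I would compare the sum against the integral via the standard Riemann inequalities: for each $l \in \{0,1,\dots,N-1\}$,
$$\frac{1}{N}\,g\!\left(\frac{l+1}{N}\right) \;\le\; \int_{l/N}^{(l+1)/N} g(x)\,dx \;\le\; \frac{1}{N}\,g\!\left(\frac{l}{N}\right).$$
Summing over $l$ and multiplying by $N$ gives
$$\sum_{l=1}^{N} g\!\left(\frac{l}{N}\right) \;\le\; N\int_{0}^{1} g(x)\,dx \;\le\; \sum_{l=0}^{N-1} g\!\left(\frac{l}{N}\right).$$
Since $g(0)-g(1)=1$, the upper and lower Riemann sums differ by exactly $1$, so the integral and either Riemann sum differ by at most $1$. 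Adding $g(0)=1$ and $g(1)=0$ suitably, the full sum $S := \sum_{l=0}^{N} g(l/N) = \E{X_{(p,p+q,r)}^{(N,D)}}$ satisfies
$$N\int_{0}^{1} g(x)\,dx \;\le\; S \;\le\; N\int_{0}^{1} g(x)\,dx + 1,$$
so $ER := S - N\int_0^1 g(x)\,dx$ lies in $[0,1]$, in particular $|ER|\le 1$, which is the claim.

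There is essentially no hard step here; the whole proof is a monotone-function Riemann sum estimate with a convenient telescoping cancellation made possible by the specific values $g(0)=1$, $g(1)=0$. The only minor care is in ensuring the bookkeeping of the $g(0)$ and $g(1)$ terms so that the bound comes out as exactly $1$ rather than $2$.
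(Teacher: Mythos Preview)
Your proof is correct and follows essentially the same approach as the paper: both start from the sum in Corollary~\ref{thm:random_a}, observe that the summand $g(x)=\left(1-I_{x^r}(q+1,p)\right)^{D}$ is monotone decreasing on $[0,1]$, and bound the difference between the sum and $N\int_0^1 g(x)\,dx$ via a telescoping Riemann-sum argument using $g(0)-g(1)=1$. Your bookkeeping is in fact slightly sharper, since you obtain $ER\in[0,1]$ rather than merely $|ER|\le 1$.
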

%\section{Proof of Theorem \ref{thm:mainbex}}
%\label{sec:proofD}
\begin{proof} %(Theorem \ref{thm:mainbex}) %The idea of proof without detailed is the following.
The main idea of the proof is simple. 
It is an approximation of the sum in Corollary~\ref{thm:random_a}
by an integral. Notice that
\begin{equation}
\sum_{l=0}^{N}\left(1-I_{\left(\frac{l}{N}\right)^r}(q+1,p)\right)^D
\label{eq:assa02}=\int_0^N\left(1-I_{\left(\frac{y}{N}\right)^r}(q+1,p)\right)^D   dy+\Delta,
\end{equation}
where $f(y)=\left(1-I_{\left(\frac{y}{N}\right)^r}(q+1,p)\right)^D$ and
$|\Delta|\le\sum_{l=0}^{N}\max_{l\le y<l+1}|f(y)-f(l)|$ 
(see~\cite[Page 179]{sedgewick}).
%$f(y)=\left(1-I_{\left(\frac{y}{N}\right)^r}(q+1,p)\right)^d.$ 
From~(\ref{eq:incomplete01}),
we deduce that the function $I_{\left(\frac{y}{N}\right)^r}(q+1,p)$ is monotone increasing over the interval $[0,N]$.
Therefore, the function $f(y)$ is monotone decreasing over the interval $[0,N]$ and the error term $|\Delta|$ telescopes on the interval $[0,N]$.
Hence
\begin{equation}
\label{eq:telescop}
|\Delta|\le |f(0)-f(N)|^d=1.
\end{equation}
It is easy to see that
\begin{equation}
\int_0^N \left(1-I_{\left(\frac{y}{N}\right)^r}(q+1,p)\right)^D  dy\\
\label{eq:fish}=N\int_0^1  \left(1-I_{x^r}(q+1,p)\right)^D  dx.
\end{equation}
Finally, combining together the result of Theorem~\ref{thm:mainbdef}
with~(\ref{eq:assa02}-\ref{eq:fish}) we get
$$
 \E{X_{(p,p+q,r)}^{(N,D)}} =N\int_{0}^{1}\left(1-I_{x^r}(q+1,p)\right)^D dx+ER, 
$$
where $|ER|\le 1$.
This is enough to prove Theorem~\ref{thm:mainbex}.
\end{proof}
%\subsection{Asymptotic analysis of data persistency of replicated erasure codes}
Unfortunately, the formula in the statement of Theorem~\ref{thm:mainbex}
makes it difficult to apply it in practice.
To that end, we now formulate a very useful asymptotic approximation for $\E{X_{(p,p+q,r)}^{(N,D)}}$. 
In the next Theorem, we analyze the expected data persistency of replicated erasure codes for the random replication strategy when the number of documents $D$ is large. %and the number of 
%In this section we investigate the expected data persistency of replicated ersure codes for random replication strategy.
In the proof of Theorem~\ref{thm:mainbd}, we apply
the Erd\'elyi's formulation of Laplace's method for the integral formula obtained in Theorem~\ref{thm:mainbex}. The following asymptotic result is true %(the proof is delayed %to Appendix~\ref{sec:proofE}).
\begin{theorem} 
\label{thm:mainbd}
Under the assumption of Theorem \ref{thm:mainbex} for large number of documents $D$ we have
$$%\begin{equation}
%\label{Emultc}
  \E{X_{(p,p+q,r)}^{(N,D)}} =\frac{\Gamma\left(1+\frac{1}{r(q+1)}\right)}{\binom{p+q}{q+1}^{\frac{1}{r(q+1)}}}N{D^{-\frac{1}{r(q+1)}}}\left(1+O\left(D^{-\frac{1}{r(q+1)}}\right)\right)+O(1),
$$%\end{equation}
where $\Gamma\left(1+\frac{1}{r(q+1)}\right)$ is the Gamma function.
\end{theorem}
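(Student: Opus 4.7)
The plan is to start from the exact integral formula of Theorem~\ref{thm:mainbex},
$$\E{X_{(p,p+q,r)}^{(N,D)}} = N\int_{0}^{1}\left(1-I_{x^r}(q+1,p)\right)^{D}dx + ER,\qquad |ER|\le 1,$$
rewrite the integrand in Laplace form, and apply Erdélyi's formulation of Laplace's method. Specifically, let $f(x) = -\ln\!\left(1 - I_{x^r}(q+1,p)\right)$, exactly the function studied in Lemma~\ref{lemma:pochodna1}. Then
$$\int_{0}^{1}\left(1-I_{x^r}(q+1,p)\right)^{D}dx = \int_{0}^{1} e^{-D f(x)}\,dx,$$
which is a textbook Laplace-type integral whose mass concentrates near the minimum of $f$.

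The next step is to identify the asymptotic behaviour near that minimum. By Lemma~\ref{lemma:pochodna1}(i), $f$ attains its unique minimum on $[0,1)$ at $x=0$ with $f(0)=0$, and for every $\delta>0$ the infimum of $f$ on $[\delta,1)$ is strictly positive, so
$$\int_{\delta}^{1} e^{-D f(x)}\,dx = O\!\left(e^{-c D}\right)$$
for some $c>0$; this tail is negligible compared to any power of $D$. By Lemma~\ref{lemma:pochodna1}(ii), near $x=0$ we have the term-wise differentiable expansion
$$f(x) = a_0 x^{r(q+1)} + a_1 x^{r(q+1)+1} + \cdots,\qquad a_0 = \binom{p+q}{q+1}.$$
This puts the integral in exactly the form required by Erdélyi's lemma: on a small interval $[0,\delta]$ the exponent is $D\phi(x)$ with $\phi(x)\sim a_0 x^{\mu}$, $\mu = r(q+1)$.

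Erdélyi's asymptotic expansion then yields
$$\int_{0}^{\delta} e^{-D f(x)}\,dx = \sum_{k\ge 0} c_k D^{-(k+1)/\mu},$$
whose leading coefficient is obtained by the standard substitution $u = D a_0 x^{\mu}$ applied to $\int_{0}^{\infty} e^{-D a_0 x^{\mu}}\,dx$, giving
$$\frac{\Gamma\!\left(1+\tfrac{1}{r(q+1)}\right)}{\binom{p+q}{q+1}^{1/(r(q+1))}}\, D^{-1/(r(q+1))}.$$
The next-order Erdélyi coefficient is of order $D^{-2/(r(q+1))}$, yielding a relative correction of size $D^{-1/(r(q+1))}$; multiplying by $N$ and absorbing the bounded $ER$ term gives exactly the asymptotic claimed.

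The main obstacle is purely technical: verifying that the hypotheses of Erdélyi's lemma actually apply. This is precisely why Lemma~\ref{lemma:pochodna1} was proved in the preliminaries—its two parts encode exactly the two conditions one needs (a strict endpoint minimum that dominates the interior, plus a convergent, term-wise differentiable power series expansion at the minimum). Once those ingredients are in hand, the computation of the leading constant reduces to the elementary change of variables above; the only other small care needed is to check that the additive $O(1)$ error term from Theorem~\ref{thm:mainbex} is truly absorbed into the $O(1)$ in the statement and does not interfere with the multiplicative $\left(1+O(D^{-1/(r(q+1))})\right)$ factor, since the leading term grows like $N D^{-1/(r(q+1))}$ which can be of smaller order than $1$ when $D$ is very large relative to $N$.
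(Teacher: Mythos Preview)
Your proposal is correct and follows essentially the same route as the paper: rewrite the integral from Theorem~\ref{thm:mainbex} in Laplace form with $f(x)=-\ln(1-I_{x^r}(q+1,p))$, invoke Lemma~\ref{lemma:pochodna1} to verify the hypotheses of Erd\'elyi's formulation of Laplace's method, and read off the leading term and first correction. The paper cites \cite[Theorem~1.1]{Nemes2013} directly for the expansion rather than splitting off the tail and computing the leading constant by the substitution $u=Da_0x^{r(q+1)}$ as you do, but the argument is the same.
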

%\section{Proof of Theorem \ref{thm:mainbd}}
%\label{sec:proofE}
\begin{proof} %(Theorem \ref{thm:mainbd})
Assume that $p\in\mathbb{N}\setminus\{0\},$  $q\in\mathbb{N}$ and $r\in\mathbb{N}\setminus\{0\}$ are fixed and independent of $D$.
Let 
\begin{equation}
\label{eq:lestac}
I(D)=\int_{0}^{1}\left(1-I_{x^r}(q+1,p)\right)^Ddx.
\end{equation}
We now provide asymptotic analysis of $I(D)$ for large $D.$ In this analysis we use the Erd\'elyi's formulation of Laplace's method. See \cite{Erdelyi, Nemes2013, Wojdylo} for a description of this technique.
Observe that
$$I(D)=\int_{0}^{1}e^{-Df(x)}dx,$$
where $f(x)=-\ln\left(1-I_{x^r}(q+1,p)\right).$

From Lemma~\ref{lemma:pochodna1} we deduce that the assumption of Theorem~1.1 in~\cite{Nemes2013} holds for $f(x):=-\ln\left(1-I_{x^r}(q+1,p)\right),$ $g(x):=1,$  $a:=0$ and $b:=1.$
Namely, we apply Formula~$(1.5)$ in~\cite[Theorem 1.1]{Nemes2013} 
for  $a:=0$ $f(0):=0,$ $\lambda:=D,$ $\beta:=1,$ $\alpha:=r(q+1),$  $b_0=1,$  $a_0=\binom{p+q}{q+1},$ $n:=k$  and deduce that the integral $I(D)$ has the following asymptotic expansion
$$\sum_{k=0}^{\infty}\Gamma\left(\frac{k+1}{r(q+1)}\right)\frac{c_k}{D^{\frac{k+1}{r(q+1)}}},$$
as $D\rightarrow\infty,$ where $(c_0)^{-1}=\binom{p+q}{q+1}^{\frac{1}{r(q+1)}}r(q+1)$ and $\Gamma\left(\frac{k+1}{r(q+1)}\right)$ is the Gamma function.
Therefore
$$I(D)=\frac{\Gamma\left(\frac{1}{r(q+1)}\right)\frac{1}{r(q+1)}}{{\binom{p+q}{q+1}}^{\frac{1}{r(q+1)}}}\frac{1}{D^{\frac{1}{r(q+1)}}}+O\left(\frac{1}{D^{\frac{2}{r(q+1)}}}\right).$$
Using the basic identity for the Gamma function 
$$\Gamma\left(\frac{1}{r(q+1)}\right)\frac{1}{r(q+1)}=\Gamma\left(1+\frac{1}{r(q+1)}\right)$$ 
(see \cite[Identity 5.5.1]{NIST} for $z:=\frac{1}{r(q+1)}$)
we get 
%\footnote{We recall the following asymptotic notation
%$f(n)=O(g(n))$ if there exists a constant $C_1>0$ and integer $N$ such that $|f(n)|\le C_1|g(n)|$ %for all $n>N.$}
\begin{equation}
\label{eq:lesta}
I(D)=\frac{\Gamma\left(1+\frac{1}{r(q+1)}\right)}{{\binom{p+q}{q+1}}^{\frac{1}{r(q+1)}}}\frac{1}{D^{\frac{1}{r(q+1)}}}+O\left(\frac{1}{D^{\frac{2}{r(q+1)}}}\right).
\end{equation}
%\end{proof}
Finally, combining together the result of Theorem \ref{thm:mainbex}, (\ref{eq:lestac}) and (\ref{eq:lesta})
we have 
$$
  \E{X_{(p,p+q,r)}^{(N,D)}} =\frac{\Gamma\left(1+\frac{1}{r(q+1)}\right)}{\binom{p+q}{q+1}^{\frac{1}{r(q+1)}}}N{D^{-\frac{1}{r(q+1)}}}\left(1+O\left(D^{-\frac{1}{r(q+1)}}\right)\right)+O(1).
$$
%where $\Gamma\left(1+\frac{1}{r(q+1)}\right)$ is the Gamma function.
This completes the proof of Theorem \ref{thm:mainbd}.
\end{proof}
Table~\ref{tab:main} explains Theorem~\ref{thm:mainbd} for $p\in\mathbb{N}\setminus\{0\}$ and some fixed parameters $q,r$ in terms of asymptotic notations. 
In view of Theorem~\ref{thm:mainbd}, we have $$\E{X_{(p,p+q,r)}^{(N,D)}}=N{\Theta\left(D^{-\frac{1}{r(q+1)}}\right)}+O(1).$$
%provided that the number of documents $D$ is large.
\begin {table*}[!ht]
 %\label{tab:main}
\caption {The expected data persistency of replicated erasure codes replicated erasure codes for \textbf{random replication strategy}
as a function of system parameters $p,q,r$ provided that $D$ is large.}
%\footnote{We note that in view of Theorem \ref{thm:mainbd} the expected data persistency doesn't not depend on the parameter $p.$}
\label{tab:line} 
\begin{center}
 \label{tab:main}
 %\caption{Nine mobile sensors located in the interrior of a cube move to new positions according to steps $(1-6)$ of Algorithm 1.}
 \begin{tabular}{|*{4}{c|}} 
 %\label{tab:main}
%   \caption{Nine mobile sensors located in the interrior of a cube move to new positions according to steps $(1-6)$ of Algorithm 1.}
 %\end{tabular}
 \hline
 parameter $p$  & parameter $q$ & parameter $r$  & $\E{X_{(p,p+q,r)}^{(N,D)}}$ \\ [0.5ex]
    \hline
 $p\in\mathbb{N}\setminus\{0\}$ & $q=0$  & $r=1$   & ${N}{\Theta\left(D^{-1}\right)}+O(1)$ \\ [0.5ex]
 \hline
 $p\in\mathbb{N}\setminus\{0\}$ & $q=0$ & $r=2$ &   $N{\Theta\left(D^{-\frac{1}{2}}\right)}+O(1)$ \\  [0.5ex]
 \hline
 $p\in\mathbb{N}\setminus\{0\}$ &$q=1$ & $r=2$ &  $N{\Theta\left(D^{-\frac{1}{4}}\right)}+O(1)$ \\  [0.5ex]
\hline
$p\in\mathbb{N}\setminus\{0\}$ & $q=2$ & $r=3$ &   $N{\Theta\left(D^{-\frac{1}{9}}\right)}+O(1)$\\  [0.5ex]  
\hline
 \end{tabular}
\end{center}
\end{table*}

\section{Analysis of Symmetric Replicated Erasure Codes }
\label{sec:symmetric}
In this section, we analyze the expected data persistency of replicated erasure codes for the \textbf{symmetric replication strategy.}
We note that \textit{the concept of the expected data persistency for symmetric replication strategy}
was investigated in Marchwicki's PhD thesis~\cite{phdkarol}. 
Namely, \textbf{symmetric replication of erasure codes} was analyzed
and the asymptotic result formulated in Theorem~\ref{thm:mainasymer} of the current paper was earlier obtained for very special parameters
$p\in\mathbb{N}\setminus\{0\},$ $q\in\mathbb{N}$  and $r=1$.
In our notation, this means for the case $REC(p,p+q,1)$.
Thus, in Theorem~\ref{thm:mainasymer} below we generalizes one of the main results of PhD thesis~\cite{phdkarol} for all parameters~$r\in\mathbb{N}\setminus\{0\}$.
We are now ready to present and analyse Algorithm~\ref{alg_symnew}.

\begin{algorithm}[H]
\caption{Symmetric replication of $REC(p,p+q,r).$} %when $r(p+q)$ is a divisor of $n.$}
\label{alg_symnew}
\begin{algorithmic}[1]
 \REQUIRE A system of $N$ storage nodes; $f^{(1)}, f^{(2)},\dots f^{(D)}$ data items; each of the data items is encoded and replicated into $(p+q)r$ chunks (see Definition \ref{definition:first});
 $f^{(1)}_1(j),f^{(1)}_2(j),\dots f^{(1)}_{p+q}(j),$ $f^{(2)}_1(j),f^{(2)}_2(j),\dots f^{(2)}_{p+q}(j),\dots$ $f^{(d)}_1(j),f^{(d)}_2(j),\dots f^{(D)}_{p+q}(j),$
 for $j\in\{1,2\dots, r\},$
 fragments after encoding and replication.
 %($X_1\le X_2\le\dots \le X_n$).
 %$n$ mobile sensors with identical sensing radius $r=\frac{y}{2n}$ placed uniformly and independently at random on the interval $[0,y].$
 \ENSURE  The final positions of $(p+q)rD$ fragments in the nodes of the storage system (see Figure $3$).
 %between the origin and the location of the latest sensor
%consecutive sensors 
% \STATE{sort the initial locations of sensors; the locations after sorting $x_1,x_2,\dots x_n,$  $x_1\le x_2\le\dots \le x_n.$}
 \STATE{$A:=1;$}
 \FOR{$k=1$  \TO $D$ } 
 \FOR{$j=1$  \TO $r$ }
 \FOR{$l=1$  \TO $p+q$ } 
% \STATE{ 
 %\FOR{$i=1$  \TO $\sqrt{n}$ } 
 \STATE{replicate fragment $f^{(k)}_{l}(j)$ at the node with identifier $A;$}
 \STATE{$A:=A+1\mod N;$}
 %\ENDFOR
% } 
\ENDFOR
 \ENDFOR
 \ENDFOR
\end{algorithmic}
\end{algorithm}

Figure $3$ illustrates Algorithm~\ref{alg_symnew} for the  system parameters $p=q=1,$ $r=2,$ number of documents $D=7$ and number of nodes $N=16$.
In this case, each document is divided and encoded into $4$ chunks. 
Hence, we have $28$ chunks replicated according to the symmetric strategy in $16$ nodes.

\begin{figure*}[ht]
%\label{fig.1asek}
  $$
\begin{array}{|c|c|c|c|c|c|c|c|c|c|}
\hline
 n & 1 & 2 & 3  & 4 & 5 & 6 & 7 & 8  \\
 \hline
 &f^{(1)}_1(1) & f^{(1)}_2(1)  &  f^{(1)}_1(2)  & f^{(1)}_2(2)  & f^{(2)}_1(1) &  f^{(2)}_2(1)
  &  f^{(2)}_1(2) 
 & f^{(2)}_2(2)  \\
 
 &f^{(5)}_1(1) & f^{(5)}_2(1)  &  f^{(5)}_1(2)  & f^{(5)}_2(2)  & f^{(6)}_1(1) &  f^{(6)}_2(1)
  &  f^{(6)}_1(2) 
 & f^{(6)}_2(2)
 \\
 \hline
 \hline
 n & 9 & 10 & 11  & 12 & 13 & 14 & 15 & 16  \\
 \hline
 &f^{(3)}_1(1) & f^{(3)}_2(1)  &  f^{(3)}_1(2)  & f^{(3)}_2(2)  & f^{(4)}_1(1) &  f^{(4)}_2(1)
  &  f^{(4)}_1(2) 
 & f^{(4)}_2(2)  \\
 
  &f^{(7)}_1(1) & f^{(7)}_2(1)  &  f^{(7)}_1(2)  & f^{(7)}_2(2)  &  &  &  & 
 \\
 
 \hline
\end{array}
$$ 
  \caption{Symmetric replicated strategy according to Algorithm \ref{alg_symnew} for replicated erasure codes $REC(1,2,2),$ i.e. the 
  system parameters $p=q=1,$ $r=2,$ the number of documents $D=7$ and the number of nodes $n=16.$} 
  %\label{fig.1asek}
\end{figure*}

We now prove the following exact formula for the expected data persistency of replicated erasure codes for the symmetric replication strategy as the integral of incomplete Beta functions.
It is worth pointing out that the proof of Theorem~\ref{thm:mainasymerfirst} is technically complicated. 
However, as in the proof of Corollary~\ref{thm:random_a} in the previous section,
we derive $\Pr\left[X_{(p,p+q,r)}^{(N,D)}>l\right]$ (see Equations~(\ref{eq:forKal}), (\ref{eq:basici}))
and apply Identity~(\ref{eq:random_variable}) to get the desired exact integral formula for $\E{K_{(p,p+q,r)}^{(N,D)}}$ (see Equations~(\ref{eq:dusa}), (\ref{eq:laster33}) and~(\ref{eq:laster33a})).

\begin{theorem} 
\label{thm:mainasymerfirst}
Let Assumption \ref{assumption:persistency} hold. 
Assume that all chunks are replicated according to the symmetric
strategy in the storage nodes as in Algorithm~\ref{alg_rand}. 
Let $(p+q)r$ be a divisor of the number of nodes $N$ and the number of documents $D\ge\frac{N}{(p+q)r}$.
Then
$$
\E{K_{(p,p+q,r)}^{(N,D)}}=
(N+1)\int_{0}^{1}\left(1-\left(I_{x}(q+1,p)\right)^r\right)^{\frac{N}{(p+q)r}}dx,
$$
where $I_{x}(q+1,p)$ is the incomplete Beta function.
%where $A(p,q,r)=\{(j_1,j_2,\dots, j_r)\in\{0,1,\dots p+q\}: j_1\le q \, \vee \,j_2\vee\dots\vee j_r\le q\}.$
%where $C_1(p,q,r)$ is some constans which depends only on $p,q$ and $r.$
\end{theorem}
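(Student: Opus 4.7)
The plan is to parallel the proof of Corollary~\ref{thm:random_a}: first derive $\Pr[K_{(p,p+q,r)}^{(N,D)}>l]$ for the symmetric placement, substitute it into Identity~(\ref{eq:random_variable}), and rewrite the resulting sum as the stated integral by means of a Beta--function identity. The genuinely new ingredient, relative to the random case, is the deterministic and highly structured placement of Algorithm~\ref{alg_symnew}, which forces careful tracking of group--level and block--level events.

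First I would unpack Algorithm~\ref{alg_symnew}: because $(p+q)r\mid N$, the $N$ nodes split into $G:=N/((p+q)r)$ disjoint groups of consecutive nodes, and within a group the $(p+q)r$ chunks of a single document are laid out as $r$ disjoint ``blocks'' of $p+q$ consecutive nodes, where block $j$ holds $f_1(j),f_2(j),\dots,f_{p+q}(j)$. Since $D\ge G$, every group holds at least one document, and all documents sharing a group have identical placement, so the first reconstruction failure is governed entirely by the first group to fail.

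The combinatorial heart is to compute the probability of survival under an auxiliary Bernoulli removal model in which each node is independently absent with probability $x$. Because different blocks and different groups occupy disjoint node sets, this probability factors cleanly: by Identity~(\ref{eq:incomplete02}) a specific block has at least $q+1$ of its $p+q$ nodes removed with probability $I_x(q+1,p)$; independence of the $r$ blocks inside a group gives $(I_x(q+1,p))^r$ for a group to fail; independence across the $G$ groups then gives $\left(1-(I_x(q+1,p))^r\right)^{G}$ for no group failing under Bernoulli$(x)$ removals. Conditioning this model on the (Binomial) number of removed nodes, and using that given this number equals $l$ the removed set is uniform over size-$l$ subsets, yields
\[
\left(1-(I_x(q+1,p))^r\right)^{G}=\sum_{l=0}^{N}\binom{N}{l}x^l(1-x)^{N-l}\,\Pr\!\left[K_{(p,p+q,r)}^{(N,D)}>l\right].
\]

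To finish, I would integrate both sides over $x\in[0,1]$ and invoke the Beta-function identity from (\ref{eq:first01})--(\ref{eq:first02}),
\[
\int_0^1 x^l(1-x)^{N-l}\,dx=\frac{l!\,(N-l)!}{(N+1)!}=\frac{1}{(N+1)\binom{N}{l}},
\]
so that the binomial coefficients on the right cancel and the sum collapses to $\frac{1}{N+1}\sum_{l=0}^{N}\Pr[K_{(p,p+q,r)}^{(N,D)}>l]$, which equals $\E{K_{(p,p+q,r)}^{(N,D)}}/(N+1)$ by Identity~(\ref{eq:random_variable}); rearranging gives the claimed integral. The main obstacle is the first, structural step: correctly identifying ``block failure'' as the appropriate atomic event under Algorithm~\ref{alg_symnew} and verifying the independence of these events across blocks and across groups, since it is precisely this independence that delivers the clean product $(I_x(q+1,p))^r$ instead of an unwieldy multivariate-hypergeometric expression.
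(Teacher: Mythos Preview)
Your argument is correct and reaches the same destination as the paper, but by a genuinely different route. Both proofs hinge on the partition of the $N$ nodes into $G=N/((p+q)r)$ groups of $r$ disjoint blocks of size $p+q$, on the failure criterion ``every block in some group has at least $q+1$ nodes removed'', and ultimately on the Beta identity $\int_0^1 x^l(1-x)^{N-l}\,dx=\bigl((N+1)\binom{N}{l}\bigr)^{-1}$. The paper, however, first computes $\Pr[K>l]$ \emph{explicitly} as a multivariate--hypergeometric sum over all admissible block--removal profiles $(j_1,\ldots,j_r)$ with multinomial weights (its Equation~(\ref{eq:dusa})), then replaces the factor $\binom{N}{l}^{-1}$ by the Beta integral, interchanges sums with the integral, and collapses the inner sums via the multinomial theorem together with a complementary--counting argument to arrive at $\bigl(1-(I_x(q+1,p))^r\bigr)^{G}$. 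You sidestep that entire combinatorial computation by passing to the auxiliary Bernoulli$(x)$ removal model, where disjointness of blocks and groups makes the product form immediate; the multinomial identity the paper invokes is, in effect, already encoded in your conditioning equation $\Pr_{\mathrm{Bern}(x)}[\text{survive}]=\sum_l\binom{N}{l}x^l(1-x)^{N-l}\Pr[K>l]$. Your approach is shorter and more conceptual; the paper's has the incidental benefit of producing a closed form for $\Pr[K>l]$ along the way (its Equation~(\ref{eq:basici})), which your argument never needs to write down.
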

%The proof of Theorem~\ref{thm:mainasymerfirst} appears in Appendix~\ref{sec:proofF}.
We note that the expected data persistency of symmetric replicated erasure codes \textbf{does not depend} on the number of documents $D$.
%\section{Proof of Theorem \ref{thm:mainasymerfirst}}
%\label{sec:proofF}
\begin{proof} %(Theorem \ref{thm:mainasymerfirst})
Fix $p\in\mathbb{N}\setminus\{0\},$ $q\in\mathbb{N}$ and $r\in\mathbb{N}\setminus\{0\}.$ 
Let $\Omega$ denote the set of nodes in the storage system. 
Assume that $|\Omega|=N$ and $\frac{N}{(p+q)r}\in\mathbb{N},$ and all chunks are replicated according to the symmetric
strategy in the storage nodes as in Algorithm~\ref{alg_rand}. 

We now make the following \textbf{important observation.}
Namely,
\begin{align*}
U_{i,j}=&\{(p+q)(r(i-1)+(j-1))+1,(p+q)(r(i-1)+(j-1))+2,\\
&\dots, (p+q)(r(i-1)+(j-1))+(p+q)\},
\end{align*}
for $i=1,2,\dots,\frac{n}{(p+q)r}$ and for  $j=1,2,\dots,r$
is the fixed partition of the set $\Omega$ into
disjoint clusters of cardinality $p+q$. 

Suppose that we are successively, independently 
and randomly removing distinct nodes from the set $\Omega$.
Let $\omega_1, \omega_2, \ldots$ be a realization of this process.
Observe that
$$
  X_{(p,p+q,r)}^{(N,D)} = \min\{l: (\exists i)(\forall j)(U_{i,j} \cap \{\omega_1,\ldots,\omega_l\}= q+1)\}
$$ 
(see Assumption \ref{assumption:persistency}).
Let $[\Omega]^l$ denote the family of all subsets
of $\Omega$ of cardinality $l$. Let $X$ be a random element from $[\Omega]^l.$ 
%Let $X_i$ denotes the random variable
Then
\begin{equation}
  \label{eq:forKal}
  \Pr[X_{(p,p+q,r)}^{(N,D)}> l] = \frac{|\{X\in[\Omega]^l:(\forall i)(\exists j)(0 \le  U_{i,j} \cap X\le q)\}|}{|[\Omega]^l|} ~.
\end{equation} 

Let us consider the space $[\Omega]^l$ with uniform probability. 
Assume that 
$$(\forall i)(\exists j)(0 \le  U_{i,j} \cap X\le q).$$

Let $j_1,j_2,\dots,j_r\in\{0,1,\dots, p+q\}.$
Define
$$
A(p,q,r)
%\label{eq:forKal01} 
=\{(j_1,j_2,\dots, j_r)\in\{0,1,\dots p+q\}: j_1\le q \, \vee \,j_2\le q\vee\dots\vee j_r\le q\}.
$$
Let $k_{(j_1,j_2,\dots,j_r)}$ denote the number of
occurrences of  $(j_1,j_2,\dots,j_r)$ in the sequence
$$\{(U_{i,1}\cap X,U_{i,2}\cap X,\dots,U_{i,r}\cap X)\}_{i\in\left\{1,2,\dots,\frac{N}{(p+q)r}\right\}}.$$ 
Observe that
\begin{equation}
\label{eq:sumere01}
\sum_{A(p,q,r)} k_{(j_1,j_2,\dots,j_r)}=\frac{N}{(p+q)r},
\end{equation}
\begin{equation}
\label{eq:sumere02}
\sum_{A(p,q,r)} (j_1+j_2+\dots +j_r)k_{(j_1,j_2,\dots,j_r)}=l.
\end{equation}
\begin{align}
\label{eq:forkal03}D(p,q,r,l)=\{k_{(j_1,j_2,\dots,j_r)}: \sum_{A(p,q,r)} (j_1+j_2+\dots +j_r)k_{(j_1,j_2,\dots,j_r)}=l\}.
\end{align}
Putting all together (\ref{eq:forKal}-\ref{eq:forkal03}) we derive
\begin{align}
\nonumber&\Pr[X_{(p,p+q,r)}^{(N,D)}> l]\\ 
 \label{eq:basici}&=  \frac{\sum\limits_{D(p,q,r,l)}\left(\frac{\left(\frac{N}{(p+q)r}\right)!}{\prod\limits_{A(p,q,r)}\left(k_{(j_1,j_2,\dots,j_r)}\right)!}\right) 
 \prod\limits_{A(p,q,r)}\left(\binom{p+q}{j_1}\binom{p+q}{j_2}\dots\binom{p+q}{j_r}\right)^{k_{(j_1,j_2,\dots,j_r)}}}
    {\binom{N}{l}}.
\end{align}
Observe that $\Pr[X_{(p,p+q,r)}^{(N,D)}> l]=0$ for any $l\ge q\frac{N}{p+q}+1$ (see Equation (\ref{eq:sumere01}) and Equation (\ref{eq:sumere02})
for $j_1=j_2=\dots =j_q=r$).
Using this and Identity (\ref{eq:random_variable})
for random variable $X:=X_{(p,p+q,r)}^{(N,D)}$ we have
\begin{equation}
\label{eq:cool}
\E{X_{(p,p+q,r)}^{(N,D)}}=\sum_{l\ge 0}\Pr[X_{(p,p+q,r)}^{(N,D)}> l]=\sum_{l=0}^{q\frac{N}{p+q}}\Pr[X_{(p,p+q,r)}^{(N,D)}> l].
\end{equation}
Combining together (\ref{eq:basici}) and (\ref{eq:cool}) we derive \textbf{the first complicated formula} for $\E{X_{(p,p+q,r)}^{(N,D)}}.$
\begin{align}
\nonumber&\E{X_{(p,p+q,r)}^{(N,D)}}=\sum_{l=0}^{q\frac{N}{p+q}}{\binom{N}{l}}^{-1}\sum\limits_{D(p,q,r,l)}\left(\frac{\left(\frac{N}{(p+q)r}\right)!}{\prod\limits_{A(p,q,r)}\left(k_{(j_1,j_2,\dots,j_r)}\right)!}\right)\times\\ 
 \label{eq:dusa}&\times\prod\limits_{A(p,q,r)}\left(\binom{p+q}{j_1}\binom{p+q}{j_2}\dots\binom{p+q}{j_r}\right)^{k_{(j_1,j_2,\dots,j_r)}}.   
\end{align}
The rest of this proof is devoted to \textbf{simplification of expression} (\ref{eq:dusa}) for  $\E{X_{(p,p+q,r)}^{(N,D)}}.$
Using Formula (\ref{eq:sumere02}) we get
\begin{equation}
\label{eq:sumere03}
\left(\frac{x}{1-x}\right)^l=\prod\limits_{A(p,q,r)}\left(\frac{x}{1-x}\right)^{(j_1+j_2+\dots+j_r)k_{(j_1,j_2,\dots,j_r)}}.
\end{equation}
Applying Identities (\ref{eq:first01}), (\ref{eq:first02}) for $a:=l+1$ and $b:=N+1-l$ we have
$$%%\begin{equation}
%\label{eq:dusas}
{\binom{N}{l}}^{-1}=(N+1)\int_{0}^{1}x^l(1-x)^{N-l}dx.
$$%\end{equation}
This together with (\ref{eq:dusa}) and (\ref{eq:sumere03}), as well as interchanging summation with integral lead to 
%$\left(\frac{x}{1-x}\right)^l=\prod\limits_{A(p,q,r)}\left(\frac{x}{1-x}\right)^{(j_1+j_2+\dots+j_r)k_{(j_1,j_2,\dots,j_r)}}$
%we derive
\begin{align*}
&\E{X_{(p,p+q,r)}^{(N,D)}}=(N+1)\int_{0}^{1} (1-x)^{N}\sum_{l=0}^{q\frac{N}{p+q}}\sum\limits_{D(p,q,r,l)}\left(\frac{\left(\frac{N}{(p+q)r}\right)!}{\prod\limits_{A(p,q,r)}\left(k_{(j_1,j_2,\dots,j_r)}\right)!}\right)\times\\ 
&\times\prod\limits_{A(p,q,r)}\left(\binom{p+q}{j_1}\binom{p+q}{j_2}\dots\binom{p+q}{j_r}\left(\frac{x}{1-x}\right)^{j_1+j_2+\dots+j_r}\right)^{k_{(j_1,j_2,\dots,j_r)}}dx. 
\end{align*}
From Equation (\ref{eq:sumere01}) and multinomial theorem we get
\begin{align}
\nonumber &\E{X_{(p,p+q,r)}^{(N,D)}}=(N+1)\int_{0}^{1} (1-x)^{N}\times\\ 
\nonumber&\times \left(\sum\limits_{A(p,q,r)}\binom{p+q}{j_1}\binom{p+q}{j_2}\dots\binom{p+q}{j_r}\left(\frac{x}{1-x}\right)^{j_1+j_2+\dots+j_r}\right)^{\frac{N}{(p+q)r}}dx\\
\label{eq:laster33}&=(N+1)\int_{0}^{1}\left(\sum\limits_{A(p,q,r)}\prod_{k=1}^r\binom{p+q}{j_k}\left(x\right)^{j_k}\left(1-x\right)^{p+q-j_k}\right)^{\frac{N}{(p+q)r}}dx.
\end{align}
We now define
\begin{align*}
&B(p,q,r)=\\
&\{(j_1,j_2,\dots, j_r)\in\{0,1,\dots p+q\}: j_1\ge q+1 \, \wedge \,j_2\ge q+1\wedge\dots\wedge j_r\ge q+1\}.
\end{align*}
Observe that
\begin{equation}
\label{eq:suma01}
A(p,q,r)\cup B(p,q,r)=\{(j_1,j_2,\dots, j_r)\in\{0,1,\dots p+q\}\},
\end{equation}
\begin{equation}
\label{eq:suma02}
A(p,q,r)\cap B(p,q,r)=\emptyset.
\end{equation}
Combining together (\ref{eq:suma01}) and the binomial theorem we derive
\begin{align*}
\sum\limits_{A(p,q,r)\cup B(p,q,r)}&\prod_{k=1}^r\binom{p+q}{j_k}\left(x\right)^{j_k}\left(1-x\right)^{p+q-j_k}\\
&=\prod_{k=1}^r\sum_{j_k=0}^{p+q}\binom{p+q}{j_k}\left(x\right)^{j_k}\left(1-x\right)^{p+q-j_k}=1.
\end{align*}
Using this, Identity (\ref{eq:suma02}), the binomial theorem, as well as  Formula (\ref{eq:incomplete02}) for $a:=q+1,$ $b:=p$ we derive
\begin{align}
\nonumber\sum\limits_{A(p,q,r)}&\prod_{k=1}^r\binom{p+q}{j_k}\left(x\right)^{j_k}\left(1-x\right)^{p+q-j_k}\\
\nonumber&=1-\sum\limits_{B(p,q,r)}\prod_{k=1}^r\binom{p+q}{j_k}\left(x\right)^{j_k}\left(1-x\right)^{p+q-j_k}\\
\label{eq:last555}&=1-\left(\sum_{j=q+1}^{p+q}x^j(1-x)^{p+q-j}\right)^r=1-\left(I_{x}(q+1,p)\right)^r.
\end{align}
Finally, combining together (\ref{eq:laster33}) and (\ref{eq:last555}) we have the desired Formula
\begin{equation}
 \label{eq:laster33a}
\E{K_{(p,p+q,r)}^{(N,D)}}=
(N+1)\int_{0}^{1}\left(1-\left(I_{x}(q+1,p)\right)^r\right)^{\frac{N}{(p+q)r}}dx.
\end{equation}
This completes the proof of Theorem \ref{thm:mainasymerfirst}.
\end{proof}
We now observe that like for the random replication strategy, the maximal expected data persistency of replicated erasure codes %In this section we investigate the expected data persistency of replicated ersure codes 
for symmetric replication strategy
is attained by the replicated erasure codes with parameter $p=1.$
\begin{corollary}
\label{thm:mainbdefabc}
Under the assumption of Theorem \ref{thm:mainasymerfirst} we have
$$%\begin{equation}
%\label{Emultceden}
 \max_{p\ge 1} \E{X_{(p,p+q,r)}^{(N,D)}} =\E{X_{(1,1+q,r)}^{(N,D)}}.
$$%\end{equation}
\end{corollary}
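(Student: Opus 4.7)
The plan is to follow the structure of the proof of Corollary~\ref{thm:mainbdef}. By Theorem~\ref{thm:mainasymerfirst} we have, for each admissible $p$,
$$
\E{X_{(p,p+q,r)}^{(N,D)}} \;=\; (N+1)\int_0^1 \bigl(1-(I_x(q+1,p))^r\bigr)^{N/((p+q)r)}\,dx,
$$
so it suffices to show that the integrand is pointwise nonincreasing in $p$ on $[0,1]$; iteration then yields the claimed maximum at $p=1$. The starting point, exactly as in Corollary~\ref{thm:mainbdef}, is Identity~(\ref{align:b100}), which gives $I_x(q+1,p+1)-I_x(q+1,p)=x^{q+1}(1-x)^p\binom{p+q}{p}\ge 0$; raising to the $r$th power yields the \emph{base inequality} $1-(I_x(q+1,p))^r \ge 1-(I_x(q+1,p+1))^r$, with both sides in $[0,1]$.

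The main obstacle — which makes this strictly more delicate than Corollary~\ref{thm:mainbdef} — is that here the exponent $N/((p+q)r)$ also decreases with $p$. Because the base lies in $[0,1]$, a smaller exponent yields a larger value, so the base and exponent effects push the integrand in opposite directions as $p$ increases. A one-step termwise comparison therefore does not suffice: the base effect must be shown to dominate the exponent effect. Raising both sides of the target inequality to the power $(p+q)(p+q+1)r/N$, the required pointwise statement reduces to
$$
\bigl(1-(I_x(q+1,p))^r\bigr)^{p+q+1} \;\ge\; \bigl(1-(I_x(q+1,p+1))^r\bigr)^{p+q}.
$$
Writing $u:=1-(I_x(q+1,p))^r$ and $\Delta:=(I_x(q+1,p+1))^r-(I_x(q+1,p))^r\ge 0$, this reads $u^{p+q+1}\ge(u-\Delta)^{p+q}$. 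Combining the mean-value bound $\Delta\le r\,(I_x(q+1,p+1))^{r-1}\,x^{q+1}(1-x)^p\binom{p+q}{p}$ with a careful binomial comparison (and checking the boundary behaviour near $x=0$ and $x=1$, where $u\to 1$ and $u\to 0$ respectively) verifies the inequality on $[0,1]$ for every admissible $p\ge 1$.

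Once this pointwise inequality is in hand, integrating against $dx$ over $[0,1]$ and multiplying by $N+1$ gives $\E{X_{(p,p+q,r)}^{(N,D)}}\ge\E{X_{(p+1,p+1+q,r)}^{(N,D)}}$; iterating from $p=1$ yields $\max_{p\ge 1}\E{X_{(p,p+q,r)}^{(N,D)}}=\E{X_{(1,1+q,r)}^{(N,D)}}$. Intuitively, increasing $p$ enlarges each block's local failure margin but also shrinks the total number of blocks; the point of the argument is that, on balance, fewer large blocks are more easily compromised than many small blocks under uniform random node removal.
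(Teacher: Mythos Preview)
You have put your finger on a real subtlety that the paper's own argument does not address. The paper's proof is a verbatim analogue of Corollary~\ref{thm:mainbdef}: from $I_x(q+1,p+1)\ge I_x(q+1,p)$ it writes
\[
\bigl(1-(I_x(q+1,p))^r\bigr)^{\frac{N}{(p+q)r}}\ \ge\ \bigl(1-(I_x(q+1,p+1))^r\bigr)^{\frac{N}{(p+q)r}}
\]
with the \emph{same} exponent on both sides, and then invokes Theorem~\ref{thm:mainasymerfirst}. But the integral representation for $\E{X_{(p+1,p+1+q,r)}^{(N,D)}}$ carries the exponent $\tfrac{N}{(p+1+q)r}$, not $\tfrac{N}{(p+q)r}$; since the base lies in $[0,1]$, the smaller exponent makes the right-hand side \emph{larger}, so the paper's displayed inequality does not by itself yield the desired comparison. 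Your diagnosis of the competing base/exponent effects is exactly right, and in this respect your proposal is more careful than the paper.

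Where your argument falls short is at the decisive step. Having reduced matters to the pointwise inequality
\[
\bigl(1-(I_x(q+1,p))^r\bigr)^{p+q+1}\ \ge\ \bigl(1-(I_x(q+1,p+1))^r\bigr)^{p+q},
\]
you dispose of it with a mean-value bound on $\Delta$, an unspecified ``careful binomial comparison'', and a check of the endpoints. None of these, singly or together, is a proof. The abstract inequality $u^{p+q+1}\ge(u-\Delta)^{p+q}$ with $0\le u-\Delta\le u\le 1$ is false in general (take $u$ close to $1$ and $\Delta=0$), so a valid argument must exploit a \emph{quantitative} relation between $\Delta$ and $u$ that is specific to the incomplete Beta function --- equivalently, one must show something like
\[
\frac{-\ln\bigl(1-(I_x(q+1,p+1))^r\bigr)}{-\ln\bigl(1-(I_x(q+1,p))^r\bigr)}\ \ge\ \frac{p+q+1}{p+q}\qquad\text{for all }x\in(0,1).
\]
You have not supplied such a relation, and neither the mean-value estimate for $\Delta$ nor the endpoint asymptotics controls the interior of $(0,1)$. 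Until this lacuna is filled, the proof is incomplete.
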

%\section{Proof of Corollary \ref{thm:mainbdefabc}}
%\label{sec:proofG}
\begin{proof} % (Corollary \ref{thm:mainbdefabc})
%Applying Identity (\ref{eq:incomplete02})  
%for  $a:=q+1,$ $b:=p$ and $a:=q+1,$ $b:=p+1$ we have
%\begin{align}
%\nonumber&\sum_{j=0}^{q} {p+q \choose j}x^j(1-x)^{p+q-j}-\sum_{j=0}^{q} {p+q+1 \choose j}x^j(1-x)^{p+q+1-j}\\
%\label{align:a100}&=I_x(q+1,p+1)-I_x(q+1,p),
%\end{align}
%where $I_x$ is incomplete Beta function (see (\ref{eq:incomplete01})).
The proof of Corollary \ref{thm:mainbdefabc} is analogous to that of Corollary \ref{thm:mainbdef}.
Using (\ref{align:b100}) 
we have
$$
I_{x}(q+1,p+1)\ge I_{x}(q+1,p).
$$
Hence
$$
\left(1-\left(I_{x}(q+1,p)\right)^r\right)^{\frac{N}{(p+q)r}}\ge
\left(1-\left(I_{x}(q+1,p+1)\right)^r\right)^{\frac{N}{(p+q)r}}.
$$
Applying Theorem \ref{thm:mainasymerfirst} we get
$$\E{X_{(p,p+q,r)}^{(N,D)}}\ge \E{X_{(p+1,p+1+q,r)}^{(N,D)}}.$$
Therefore 
$$
 \max_{p\ge 1} \E{X_{(p,p+q,r)}^{(N,D)}} =\E{X_{(1,1+q,r)}^{(N,D)}}.
$$
This completes the proof of Corollary \ref{thm:mainbdefabc}.
\end{proof}
%The proof of Corollary~\ref{thm:mainbdefabc} appears in Appendix~\ref{sec:proofG}.
As before, the formula in Theorem~\ref{thm:mainasymerfirst} makes it hard to apply in practice.
To that end, we formulate in the next theorem a very useful asymptotic approximation for $\E{X_{(p,p+q,r)}^{(N,D)}}$. 
Particularly, we analyze the expected data persistency of replicated erasure codes for the symmetric replication strategy when the number of nodes $N$ is large. 
In the proof of Theorem~\ref{thm:mainasymer} %(Appendix~\ref{sec:proogH}) 
we apply again the Erd\'elyi's formulation of Laplace's method for the integral formula obtained in Theorem~\ref{thm:mainbdefabc}.
%
%The following asymptotic result holds. 
\begin{theorem} 
\label{thm:mainasymer}
Let Assumption~\ref{assumption:persistency} hold. 
Assume that all chunks are replicated according to the symmetric
strategy in the storage nodes as in  Algorithm~\ref{alg_rand}. %Let $r(p+q)$ be a divisor of $N$ and $D\le\frac{N}{r(p+q)}.$
Let $(p+q)r$ be a divisor of the number of nodes $N$ and the number of documents $D\ge\frac{N}{(p+q)r}$.
Then
\begin{equation}
%\label{Emult:asym}
 \E{X_{(p,p+q,r)}^{(N,D)}} =\frac{\Gamma\left(1+\frac{1}{r(q+1)}\right)\left((p+q)r\right)^{\frac{1}{r(q+1)}}}{\binom{p+q}{q+1}^{\frac{1}{q+1}}}N^{1-\frac{1}{r(q+1)}}
  +O\left(\frac{1}{N^{\frac{2}{r(q+1)}}}\right),
  %\E{K_{(p,p+q,r)}^{(N,D)}} =\frac{\Gamma\left(1+\frac{1}{r(q+1)}\right)}{\binom{p+q}{q+1}^{\frac{1}{(q+1)}}}N^{1-\frac{1}{r(q+1)}}\left(1+O\left(\frac{1}{N^{\frac{2}{r(q+1)}}}\right)\right),
\end{equation}
where $\Gamma\left(1+\frac{1}{r(q+1)}\right)$ is the Gamma function.
%where $C_1(p,q,r)$ is some constans which depends only on $p,q$ and $r.$
\end{theorem}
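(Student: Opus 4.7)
The plan is to mimic exactly the proof of Theorem~\ref{thm:mainbd}, using the integral representation established in Theorem~\ref{thm:mainasymerfirst} together with the analytic properties of the auxiliary function collected in Lemma~\ref{lemma:pochodna2}. Since all the heavy lifting (the exact integral, the smoothness of the integrand, and the Maclaurin coefficients) has been done already, the proof should reduce to a direct application of Erd\'elyi's formulation of Laplace's method.

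First I would rewrite the right-hand side of Theorem~\ref{thm:mainasymerfirst} in the exponential form suitable for Laplace's method. Specifically, set
$$
f(x)=-\frac{1}{(p+q)r}\ln\!\left(1-\left(I_x(q+1,p)\right)^r\right),
$$
which is exactly the function of Lemma~\ref{lemma:pochodna2}. Then
$$
\E{X_{(p,p+q,r)}^{(N,D)}}=(N+1)\int_{0}^{1} e^{-N f(x)}\,dx.
$$
Next, I would verify that Lemma~\ref{lemma:pochodna2} supplies all of the hypotheses required by Theorem~1.1 of~\cite{Nemes2013}: part~(i) gives the unique minimum of $f$ on $[0,1]$ at $x=0$ together with the positive-infimum condition away from the minimum, and part~(ii) provides the needed convergent expansion $f(x)=\sum_{k\ge 0} a_k x^{k+r(q+1)}$ with $a_0=((p+q)r)^{-1}\binom{p+q}{q+1}^r$, as well as term-wise differentiability.

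I would then invoke Erd\'elyi's theorem with $g(x):=1$, $a:=0$, $b:=1$, $\lambda:=N$, $\beta:=1$, $\alpha:=r(q+1)$, $b_0:=1$, $a_0:=((p+q)r)^{-1}\binom{p+q}{q+1}^r$ to conclude that
$$
\int_{0}^{1}e^{-Nf(x)}\,dx
=\Gamma\!\left(\tfrac{1}{r(q+1)}\right)\frac{c_0}{N^{\frac{1}{r(q+1)}}}
 +O\!\left(\frac{1}{N^{\frac{2}{r(q+1)}}}\right),
$$
where, according to the recipe in~\cite{Nemes2013}, the leading constant is given by
$$
c_0=\frac{1}{a_0^{\frac{1}{r(q+1)}}\,r(q+1)}
    =\frac{\bigl((p+q)r\bigr)^{\frac{1}{r(q+1)}}}{\binom{p+q}{q+1}^{\frac{1}{q+1}}\,r(q+1)},
$$
after simplifying the exponent $r/(r(q+1))=1/(q+1)$.

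Finally, I would combine these and use the identity $\Gamma(z)\,z=\Gamma(z+1)$ for $z=1/(r(q+1))$ (see \cite[Identity 5.5.1]{NIST}) to absorb the factor $1/(r(q+1))$ into a Gamma function, then multiply by $(N+1)$ to obtain the announced leading term $\tfrac{\Gamma(1+1/(r(q+1)))((p+q)r)^{1/(r(q+1))}}{\binom{p+q}{q+1}^{1/(q+1)}}N^{1-\frac{1}{r(q+1)}}$, together with the stated error term. The only genuinely delicate step is checking that the constants coming out of Erd\'elyi's recipe, after the exponent simplifications and the shift in Gamma, really collapse to the form printed in the theorem; this is a purely algebraic bookkeeping exercise parallel to the one carried out at the end of the proof of Theorem~\ref{thm:mainbd}. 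The rest of the argument is a mechanical transcription of that earlier proof with $D$ replaced by $N/((p+q)r)$ and with Lemma~\ref{lemma:pochodna1} replaced by Lemma~\ref{lemma:pochodna2}.
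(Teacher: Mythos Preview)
Your proposal is correct and follows essentially the same approach as the paper: rewrite the integral of Theorem~\ref{thm:mainasymerfirst} in the form $\int_0^1 e^{-Nf(x)}\,dx$, invoke Lemma~\ref{lemma:pochodna2} to verify the hypotheses of Erd\'elyi--Nemes, read off $c_0$ from $a_0=((p+q)r)^{-1}\binom{p+q}{q+1}^r$, and then use $\Gamma(z)z=\Gamma(z+1)$ before multiplying by $(N+1)$. Your write-up is in fact slightly cleaner than the paper's, which contains a harmless slip (it writes $e^{-\frac{N}{(p+q)r}f(x)}$ and cites Lemma~\ref{lemma:pochodna1} where Lemma~\ref{lemma:pochodna2} is meant).
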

%\section{Proof of Theorem \ref{thm:mainasymer}}
%\label{sec:proogH}
\begin{proof} %(Theorem \ref{thm:mainasymer})
The proof of Theorem~\ref{thm:mainasymer} is analogous to that of Theorem~\ref{thm:mainbd}.
Assume that $p\in\mathbb{N}\setminus\{0\},$  $q\in\mathbb{N}$ and $r\in\mathbb{N}\setminus\{0\}$ are fixed and independent on $N$.
Let 
\begin{equation}
\label{eq:lestacek}
J(N)=\int_{0}^{1}\left(1-\left(I_{x}(q+1,p)\right)^r\right)^{\frac{N}{(p+q)r}}dx.
\end{equation}
In the asymptotic analysis of $J(N)$ for large $N$ we apply the Erd\'elyi's formulation of Laplace's method. %See \cite{Erdelyi, Nemes2013, Wojdylo} for a description of this technique.
Observe that
$$J(N)=\int_{0}^{1}e^{-\frac{N}{(p+q)r}f(x)}dx,$$
where $f(x)=-((p+q)r)^{-1}\ln\left(1-\left(I_{x}(q+1,p)\right)^r\right)$.

From Lemma~\ref{lemma:pochodna1}, we deduce that the assumption of Theorem~1.1 in~\cite{Nemes2013} holds for $f(x):=-((p+q)r)^{-1}\ln\left(1-\left(I_{x}(q+1,p)\right)^r\right),$ $g(x):=1,$  $a:=0$ and $b:=1$.
Namely, we  apply Formula~$(1.5)$ in~\cite[Theorem 1.1]{Nemes2013} 
for  $a:=0$ $f(0):=0,$ $\lambda:=N,$ $\beta:=1,$ $\alpha:=r(q+1),$  $b_0=1,$  $a_0=((p+q)r)^{-1}\binom{p+q}{q+1}^r,$ $n:=k$   and deduce that the integral $J(N)$ has the following asymptotic expansion
$$\sum_{k=0}^{\infty}\Gamma\left(\frac{k+1}{r(q+1)}\right)\frac{c_k}{\left(N\right)^{\frac{k+1}{r(q+1)}}},$$
as $N\rightarrow\infty,$ where $(c_0)^{-1}=\left(\frac{1}{(p+q)r}\right)^{\frac{1}{r(q+1)}}\binom{p+q}{q+1}^{\frac{1}{q+1}}(r(q+1))$ and $\Gamma\left(\frac{k+1}{r(q+1)}\right)$ is the Gamma function.
Therefore
$$J(N)=\frac{\Gamma\left(\frac{1}{r(q+1)}\right)\frac{1}{r(q+1)}}{{\binom{p+q}{q+1}}^{\frac{1}{q+1}}}\frac{\left((p+q)r\right)^{\frac{1}{r(q+1)}}}{N^{\frac{1}{r(q+1)}}}+O\left(\frac{1}{N^{\frac{2}{r(q+1)}}}\right).$$
Using the basic identity for the Gamma function 
$$\Gamma\left(\frac{1}{r(q+1)}\right)\frac{1}{r(q+1)}=\Gamma\left(1+\frac{1}{r(q+1)}\right)$$ 
(see \cite[Identity 5.5.1]{NIST} for $z:=\frac{1}{r(q+1)}$)
we get 
%\footnote{We recall the following asymptotic notation
%$f(n)=O(g(n))$ if there exists a constant $C_1>0$ and integer $N$ such that $|f(n)|\le C_1|g(n)|$ %for all $n>N.$}
\begin{equation}
\label{eq:lesta01abc}
J(N)=\frac{\Gamma\left(1+\frac{1}{r(q+1)}\right)}{{\binom{p+q}{q+1}}^{\frac{1}{q+1}}}\frac{\left((p+q)r\right)^{\frac{1}{r(q+1)}}}{N^{\frac{1}{r(q+1)}}}+O\left(\frac{1}{N^{\frac{2}{r(q+1)}}}\right).
\end{equation}
%\end{proof}
Finally, combining together the result of Theorem~\ref{thm:mainasymerfirst},~(\ref{eq:lestacek}) and~(\ref{eq:lesta01abc}), we have 
$$
  \E{X_{(p,p+q,r)}^{(N,D)}} =\frac{\Gamma\left(1+\frac{1}{r(q+1)}\right)\left((p+q)r\right)^{\frac{1}{r(q+1)}}}{\binom{p+q}{q+1}^{\frac{1}{q+1}}}N^{1-\frac{1}{r(q+1)}}
  +O\left(\frac{1}{N^{\frac{2}{r(q+1)}}}\right).
  %\left(1+O\left(D^{-\frac{1}{r(q+1)}}\right)\right)+O(1).
$$
%where $\Gamma\left(1+\frac{1}{r(q+1)}\right)$ is the Gamma function.
This completes the proof of Theorem \ref{thm:mainasymer}. 
\end{proof}
Table~\ref{tab:mains} explains Theorem~\ref{thm:mainasymer} for $p\in\mathbb{N}\setminus\{0\}$ and some fixed parameters $q,r$ and in terms of asymptotic notations. 
In view of Theorem~\ref{thm:mainasymer}, we have $$\E{X_{(p,p+q,r)}^{(N,D)}}=\Theta\left(N^{1-\frac{1}{r(q+1)}}\right).$$
\begin {table*}[!ht]
 %\label{tab:main}
\caption {The expected data persistency of replicated erasure codes $REC(p,p+q,r)$ for \textbf{symmetric replication strategy} as a function of the system parameters $p,q,r,$ provided that $N$ is large.}
%\footnote{We note that in view of Theorem \ref{thm:mainbd} the expected data persistency doesn't not depend on the parameter $p.$}
\label{tab:lines} 
\begin{center}
 \label{tab:mains}
 %\caption{Nine mobile sensors located in the interrior of a cube move to new positions according to steps $(1-6)$ of Algorithm 1.}
 \begin{tabular}{|*{4}{c|}} 
 %\label{tab:main}
%   \caption{Nine mobile sensors located in the interrior of a cube move to new positions according to steps $(1-6)$ of Algorithm 1.}
 %\end{tabular}
 \hline
 parameter $p$  & parameter $q$ & parameter $r$  & $\E{X_{(p,p+q,r)}^{(N,D)}}$ \\ [0.5ex]
    \hline
 $p\in\mathbb{N}\setminus\{0\}$ & $q=0$  & $r=1$ &    ${\Theta\left(1\right)}$ \\ [0.5ex]
 \hline
$p\in\mathbb{N}\setminus\{0\}$  & $q=0$ & $r=2$ &  ${\Theta\left(N^{\frac{1}{2}}\right)}$ \\  [0.5ex]
 \hline
$p\in\mathbb{N}\setminus\{0\}$  &  $q=1$ & $r=2$ &  ${\Theta\left(N^{\frac{3}{4}}\right)}$ \\  [0.5ex]
\hline
$p\in\mathbb{N}\setminus\{0\}$  & $q=2$ & $r=3$ &   ${\Theta\left(N^{\frac{8}{9}}\right)}$\\  [0.5ex]  
\hline
 \end{tabular}
\end{center}
\end{table*}

\section{Discussions}
\label{sec:discussions}
In this study, we analyzed the expected data persistency of $REC(p,p+q,r)$ (replicated erasure codes) in distributed storage systems for the random and symmetric replication strategies.
Recall that in $REC(p,p+q,r)$, each document is encoded and replicated.
To restore the original document, one has to download $p$ chunks, each from $r$ different replication sets (see Assumption~\ref{assumption:persistency}).
Thus, if $q+1$ chunks in each $r$ replication sets are placed in the nodes that leave the storage system, then restoring the original document becomes impossible. 

The expected data persistency of a storage system for the random (as well as the symmetric) strategy depends on $r(q+1)$. 
Namely, the expected data persistency is in $N\Theta\left(D^{-\frac{1}{r(q+1)}}\right)$ for the random replication strategy and
%If the documents are replicated according to symmetric strategy the expected data persistnecy 
is in $\Theta\left(N^{1-\frac{1}{r(q+1)}}\right)$ for the symmetric replication strategy, where $N$ is the number of nodes in the storage system and $D$ is the number of documents.
Table~\ref{tab:dwasemen} displays a comparison of the strategies: random and symmetric.

\begin{table}[H]
%\label{tab:dwa}
\caption {Comparision of the expected data persistency for random and symmetric erasure codes.} \label{tab:hugher} 
\begin{center}
\label{tab:dwasemen}
 %\caption{Nine mobile sensors located in the interrior of a cube move to new positions according to steps $(1-6)$ of Algorithm 1.}
 \begin{tabular}{|c|c|c|c|} 
%   \caption{Nine mobile sensors located in the interrior of a cube move to new positions according to steps $(1-6)$ of Algorithm 1.}
 %\end{tabular}
 \hline
 \begin{tabular}{c}Number \\ of documents\end{tabular}  & 
Replication strategy & \begin{tabular}{c}Expected \\ data persistency\end{tabular} \\
  \hline
 $D=O(N)$ & random & $N\Theta\left(D^{-\frac{1}{r(q+1)}}\right)$ \\ 
 \hline
 $D=\Theta(N)$ & random or symmetric & $\Theta\left(N^{1-\frac{1}{r(q+1)}}\right)$ \\ 
 \hline
 $D=\Omega(N)$ & symmetric & 
$\Theta\left(N^{1-\frac{1}{r(q+1)}}\right)$\\ 
 \hline
 \end{tabular}
\end{center}
\end{table}

Let us consider case $D=\Delta N,$ provided that $\Delta$ is constant, i.e., $\Delta$ is the average number of documents per node. 
Then both strategies give asymptotically the same expected data persistency.
When $D=O(N)$, then the random strategy gives highest expected data persistency.
If $D=\Omega(N)$, then the symmetric strategy is better.

\subsection{Non-uniform Redundancy Scheme}
In this paper, it is assumed that each document is encoded and replicated according to $REC(p,p+q,r)$ for some fixed parameters $p,q,r$.
However, our analysis is not limited to \textit{uniform} settings only, and the proposed
theory can also handle \textit{non-uniform} redundancy schemes, where each document is encoded and replicated according to different parameters~$p,q,r$.

Namely, assume that $D_i$ is the number of documents that are encoded and replicated according to $REC(p_i,p_i+q_i,r_i)$
for some fixed $p_i\in\mathbb{N}\setminus\{0\},$ $q_i\in\mathbb{N}$ and $r_i\in\mathbb{N}\setminus\{0\}$, 
provided that $i=1,2,\dots ,k$ and $D_1+D_2+\dots D_k=D$.
%Let $r(q+1)_{\min}=\min\left(r_1(q_1+1), r_2(q_2+1),\dots r_D(q_D+1)\right).$

Then, the  the expected data persistency of storage system with $D_i$ documents is in $N\Theta\left(D_i^{-\frac{1}{r(q+1)}}\right)$ for the random replication strategy and
%If the documents are replicated according to symmetric strategy the expected data persistnecy 
is in $\Theta\left(N^{1-\frac{1}{r(q+1)}}\right)$ for the symmetric replication strategy, where $N$ is the number of nodes. 
Of course, the closed analytical and asymptotic formulas for expected data persistency of a storage system with $D_1+D_2+\dots D_k$ documents need further theoretical studies.

\subsection{Exact Formulas in Terms of Beta Functions}
\label{sub:exactly}
We proved that the maximal expected data persistency of $REC(p,p+q,r)$ for both replication strategies is attained by the replicated erasure codes with parameter $p=1$ (see Theorem~\ref{thm:mainbdef} in Section~\ref{section:random} and Theorem~\ref{thm:mainbdefabc} in Section~\ref{sec:symmetric}). 
Let us recall that for $p=1$ replicated erasure codes becomes
simply replication with $(1+q)r$ replicas. If we restrict to $p=1$, we can give exacts formulas for the expected data persistency of $REC(1,1+q,r)$ in terms of the Beta function (see Formula~(\ref{eq:beta777}) and Formula~(\ref{eq:beta888})).

\paragraph*{a) The random strategy}
Combining together Theorem~\ref{thm:mainbex} with Equation~(\ref{eq:incomplete02}) for $a:=q+1, b:=1$ and $x:=x^r$ we have
$$
\E{X_{(1,1+q,r)}^{(N,D)}} =N\int_{0}^{1}\left(\sum_{j=0}^{q}\binom{q+1}{j}x^{rj}\left(1-x^r\right)^{q+1-j}\right)^{D} dx+ER.
$$
By the binomial theorem we get
$$%\begin{equation}
%\label{eq:last8877}
\E{X_{(1,1+q,r)}^{(N,D)}} =N\int_{0}^{1}\left(1-x^{r(q+1)}\right)^{D} dx+ER.
$$%\end{equation}
Together, substitution $y:=x^{r(q+1)}$ in the previous formula and Definition~\ref{eq:first01} for $a:=D+1,$ $b:=\frac{1}{r(q+1)}$ lead to
\begin{equation}
\label{eq:beta777}
\E{X_{(1,1+q,r)}^{(N,D)}}=\frac{N}{r(q+1)}\mathrm{Beta}\left(D+1,\frac{1}{r(q+1)}\right)+ER,\,\,\,\text{where}\,\,\,|ER|\le 1.
\end{equation}

\paragraph*{b) The symmetric strategy}
Combining together Theorem~\ref{thm:mainasymerfirst} %Equation (\ref{Emult:asymedek}) 
with Equation~(\ref{eq:incomplete02}) for $a:=q+1, b:=1$  we have
$$
\E{X_{(1,1+q,r)}^{(N,D)}} =(N+1)\int_{0}^{1}\left(1-\left(1-\sum_{j=0}^{q}\binom{q+1}{j}x^{j}\left(1-x\right)^{q+1-j}\right)^r\right)^{\frac{N}{r(q+1)}}dx
$$
By the binomial theorem we get
$$%\begin{equation}
%\label{eq:last8899}
\E{X_{(1,1+q,r)}^{(N,D)}} =(N+1)\int_{0}^{1}\left(1-x^{(q+1)r}\right)^{\frac{N}{r(q+1)}} dx.
$$%\end{equation}
Together substitution $y:=x^{r(q+1)}$ in the previous formula and Definition~\ref{eq:first01} for $a:=\frac{N}{(q+1)r}+1,$ $b:=\frac{1}{r(q+1)}$ lead to
\begin{equation}
\label{eq:beta888}
\E{X_{(1,1+q,r)}^{(N,D)}}=\frac{N+1}{r(q+1)}\mathrm{Beta}\left(\frac{N}{r(q+1)}+1,\frac{1}{r(q+1)}\right).
\end{equation}
We note that the Exact Formula (\ref{eq:beta888}) was obtained in
\cite{JCIChordSSS} for the very special parameter $r=1.$
%\newpage

\section{Numerical Results}
\label{sec:experiments}
In this section, we use a set of experiments to show how the parameters $p,q,r$ impact the expected data persistency of the two replication strategies: random and symmetric. 
Algorithms~\ref{alg_rand} and~\ref{alg_symnew} are implemented to illustrate Theorem~\ref{thm:mainbd} together with the exact Formula~(\ref{eq:beta777})
and Theorem~\ref{thm:mainasymer}. 

We repeated the following experiments $10$ times. 
First, for each number of nodes $N\in\{48*1, 48*2, 48*3, 48*62\}$, we generated $50$ replicas of $D$ documents according to replicated erasure codes and Algorithms~\ref{alg_rand} and~\ref{alg_symnew} respectively. 
Let $E_{n,50}$ be the average of $50$ measurements of the data persistency.
Then, we placed the points in the set $\{(n,E_{n,50}):n=48*1,48*2,48*3,\dots,48*62\}$ into the picture.

\begin{minipage}{\linewidth}
      \centering
      \begin{minipage}{0.43\linewidth}
          \begin{figure}[H]
              \includegraphics[width=1.0\linewidth]{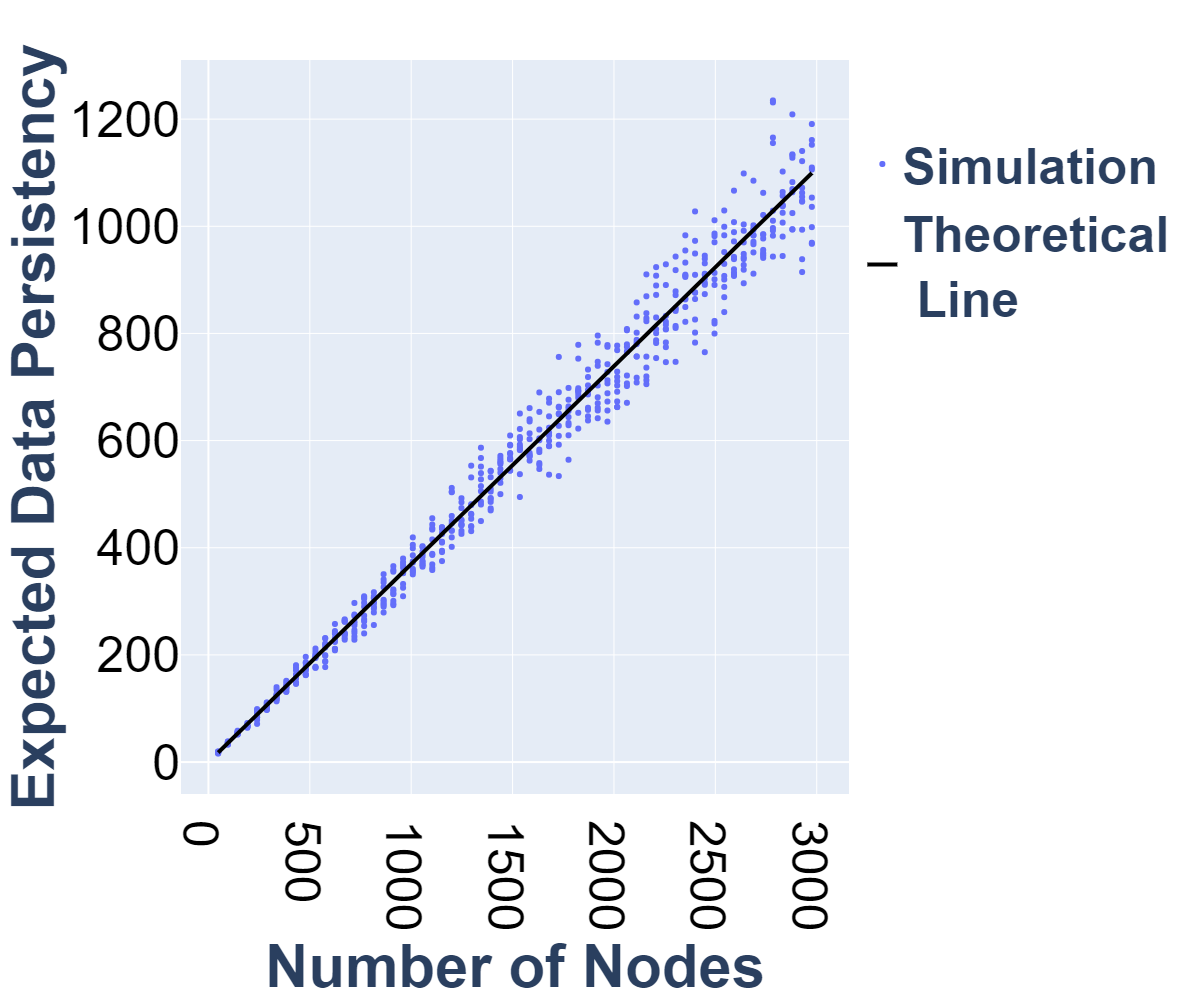}
              \vspace{2pt}
              \centerline{$\mathbf{E}\left[X_{(1,1,2)}^{(N,5)}\right]=\frac{\mathrm{Beta}\left(6,\frac{1}{2}\right)}{2}N$}
              \vspace{-10pt}
              \caption{The expected data persistency of Algorithm \ref{alg_rand} for $p=1,$ $q=0,$ $r=2$ and $D=5$}
              \label{fig:4}
          \end{figure}
      \end{minipage}
      \hspace{0.05\linewidth}
      \begin{minipage}{0.43\linewidth}
          \begin{figure}[H]
              \includegraphics[width=1.03\linewidth]{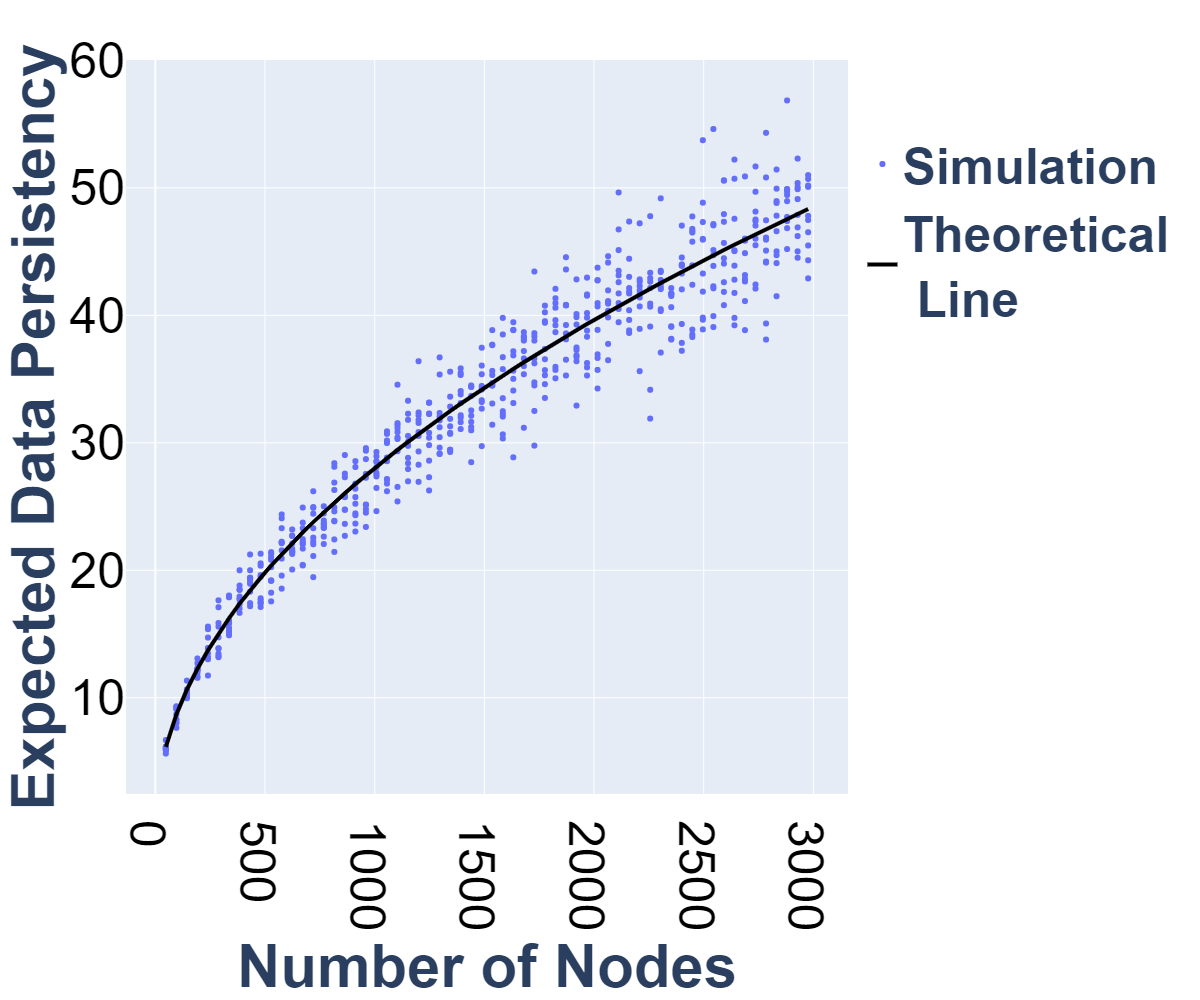}
               \vspace{2pt}
              \centerline{$\mathbf{E}\left[X_{(1,1,2)}^{(N,N)}\right]=\Gamma\left(\frac{3}{2}\right)\sqrt{N}$}
              \vspace{-10pt}
              \caption{The expected data persistency of Algorithm \ref{alg_rand} for $p=1,$ $q=0,$ $r=2$ and $D=N$}
              \label{fig:5}
          \end{figure}
      \end{minipage}
      \vspace{5pt}
  \end{minipage}

In Figures~\ref{fig:4} and~\ref{fig:5} the dots represent the experimental data persistency of Algorithm~\ref{alg_rand} considering the parameters $p=1,$ $q=0,$ $r=2,$ $D=5$
and $p=1,$ $q=0,$ $r=2,$ $D=N$ for the number of nodes $N\in\{48*1, 48*2, 48*3,\dots, 48*62\}$.
The additional lines $\left\{\left(N,\frac{\mathrm{Beta}\left(6,\frac{1}{2}\right)}{2}N\right), 48\le N\le 2976\right\},$\\
$\left\{\left(N,\Gamma\left(\frac{3}{2}\right)\sqrt{N}\right), 48\le N\le 2976\right\},$
are the leading terms in the theoretical estimation of the expected data persistency of Algorithm~\ref{alg_rand} when $p=1,$ $q=0,$ $r=2,$ $D=5$ 
(see exact Formula~(\ref{eq:beta777}) for $p=1,$ $q=0,$ $r=2,$ $D=5$)
and when $p=1,$ $q=0,$ $r=2,$ $D=N$ (see Theorem~\ref{thm:mainbd} for $p=1,$ $q=0,$ $r=2,$ $D=N$).
Figures~\ref{fig:4} and~\ref{fig:5} together illustrate the decline from $\frac{\mathrm{Beta}\left(6,\frac{1}{2}\right)}{2}N$ to 
$\Gamma\left(\frac{3}{2}\right)\sqrt{N}$ in the expected data persistency for the random replication strategy in Algorithm~\ref{alg_rand} when the number of documents $D$ increases from $5$ to $N$.

 \begin{minipage}{\linewidth}
      \centering
      \begin{minipage}{0.43\linewidth}
          \begin{figure}[H]
              \includegraphics[width=0.99\linewidth]{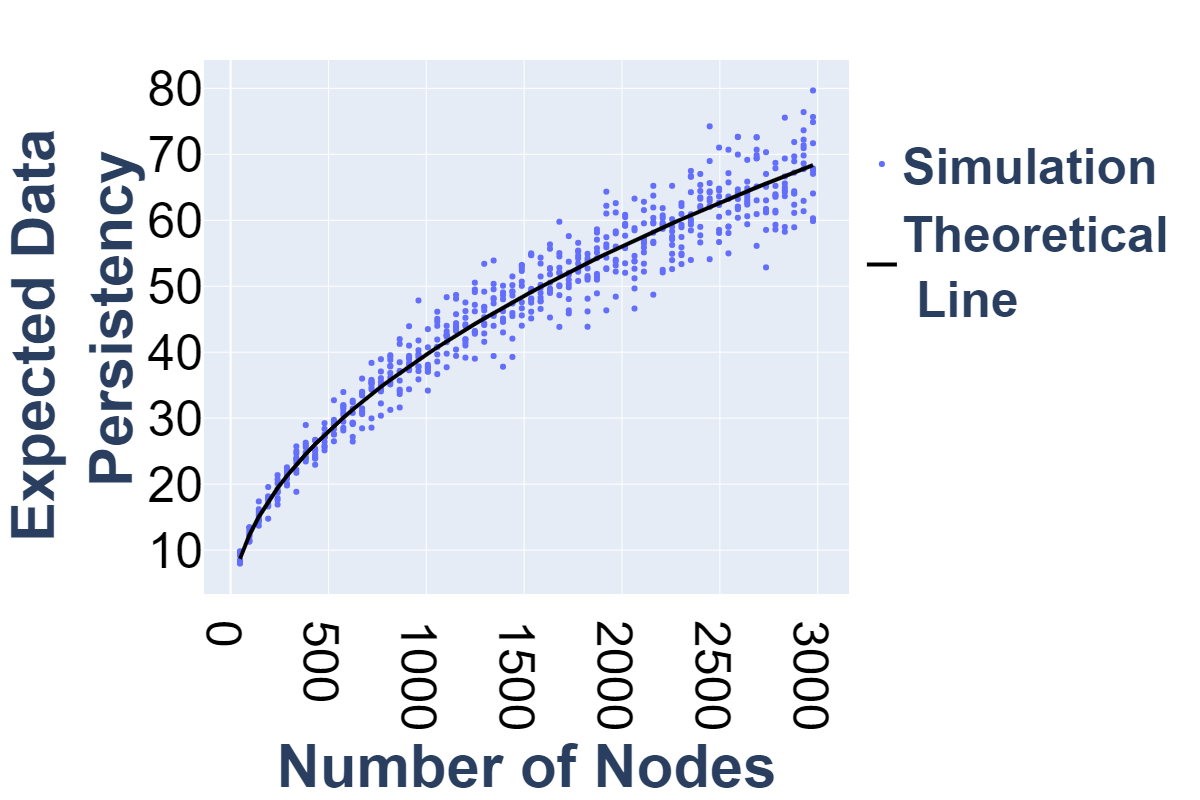}
              \vspace{2pt}
              \centerline{$\mathbf{E}\left[X_{(1,2,1)}^{(N,D)}\right]=\Gamma\left(\frac{3}{2}\right)2^{\frac{1}{2}}\sqrt{N}$}
              \vspace{-10pt}
              \caption{The expected data persistency of Algorithm \ref{alg_symnew} for $p=1,$ $q=1,$ $r=1$}
              \label{fig:6}
          \end{figure}
      \end{minipage}
      \hspace{0.05\linewidth}
      \begin{minipage}{0.43\linewidth}
          \begin{figure}[H]
              \includegraphics[width=1.01\linewidth]{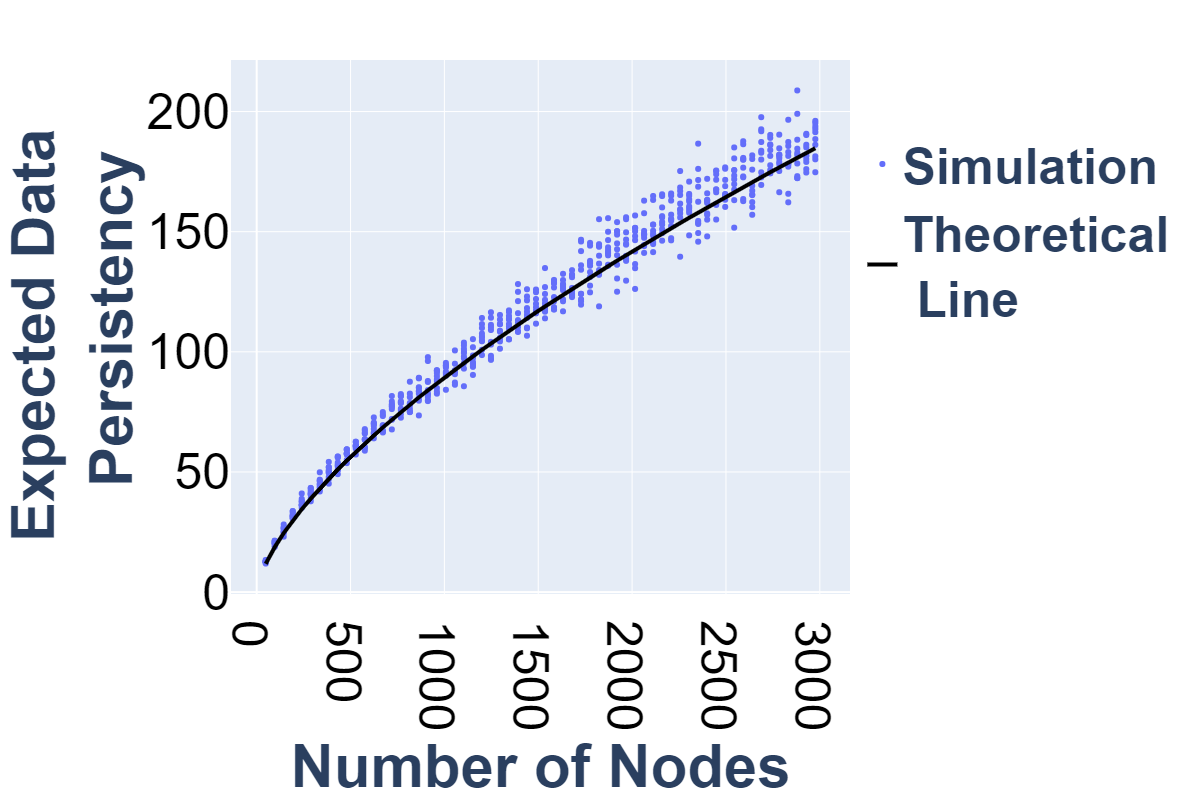}
               \vspace{2pt}
              \centerline{$\mathbf{E}\left[X_{(2,4,1)}^{(N,D)}\right]=\Gamma\left(\frac{4}{3}\right) N^{\frac{2}{3}}$}
              \vspace{-10pt}
              \caption{The expected data persistency of Algorithm \ref{alg_symnew} for $p=2,$ $q=2,$ $r=1$}
              \label{fig:7}
          \end{figure}
      \end{minipage}
      \vspace{5pt}
  \end{minipage}  

The experimental data persistency of Algorithm~\ref{alg_symnew} considering the parameters $p=1,$ $q=1,$ $r=1$ and $p=2,$ $q=2,$ $r=1$ is illustrated in Figures~\ref{fig:6} and~\ref{fig:7} for the number of nodes $N\in\{48*1, 48*2, 48*3,\dots, 48*62\}$. 
Similarly, the additional lines 
 $\left\{\left(N,\Gamma\left(\frac{3}{2}\right)2^{\frac{1}{2}}\sqrt{N}\right), 48\le N\le 2976\right\},$
$\left\{\left(N,\Gamma\left(\frac{4}{3}\right) N^{\frac{2}{3}}\right), 48\le N\le 2976\right\},$ are\\
the leading terms in the theoretical estimation of the expected data persistency of Algorithm~\ref{alg_symnew} when $p=1,$ $q=1,$ $r=1$
and when $p=2,$ $q=2,$ $r=1$  (see Theorem~\ref{thm:mainasymer} for $p=1,$ $q=0,$ $r=1$ and $p=2,$ $q=2,$ $r=1$).
Figures~\ref{fig:6} and~\ref{fig:7} together illustrate the increase from $\Gamma\left(\frac{3}{2}\right)2^{\frac{1}{2}}\sqrt{N}$ to 
$\Gamma\left(\frac{4}{3}\right) N^{\frac{2}{3}}$ in the expected data persistency for the random replication strategy in Algorithm~\ref{alg_symnew} when $r=1$ and $p,q$ increase from $1$ to $2$.

\begin{minipage}{\linewidth}
      \centering
      \begin{minipage}{0.43\linewidth}
          \begin{figure}[H]
              \includegraphics[width=0.8\linewidth]{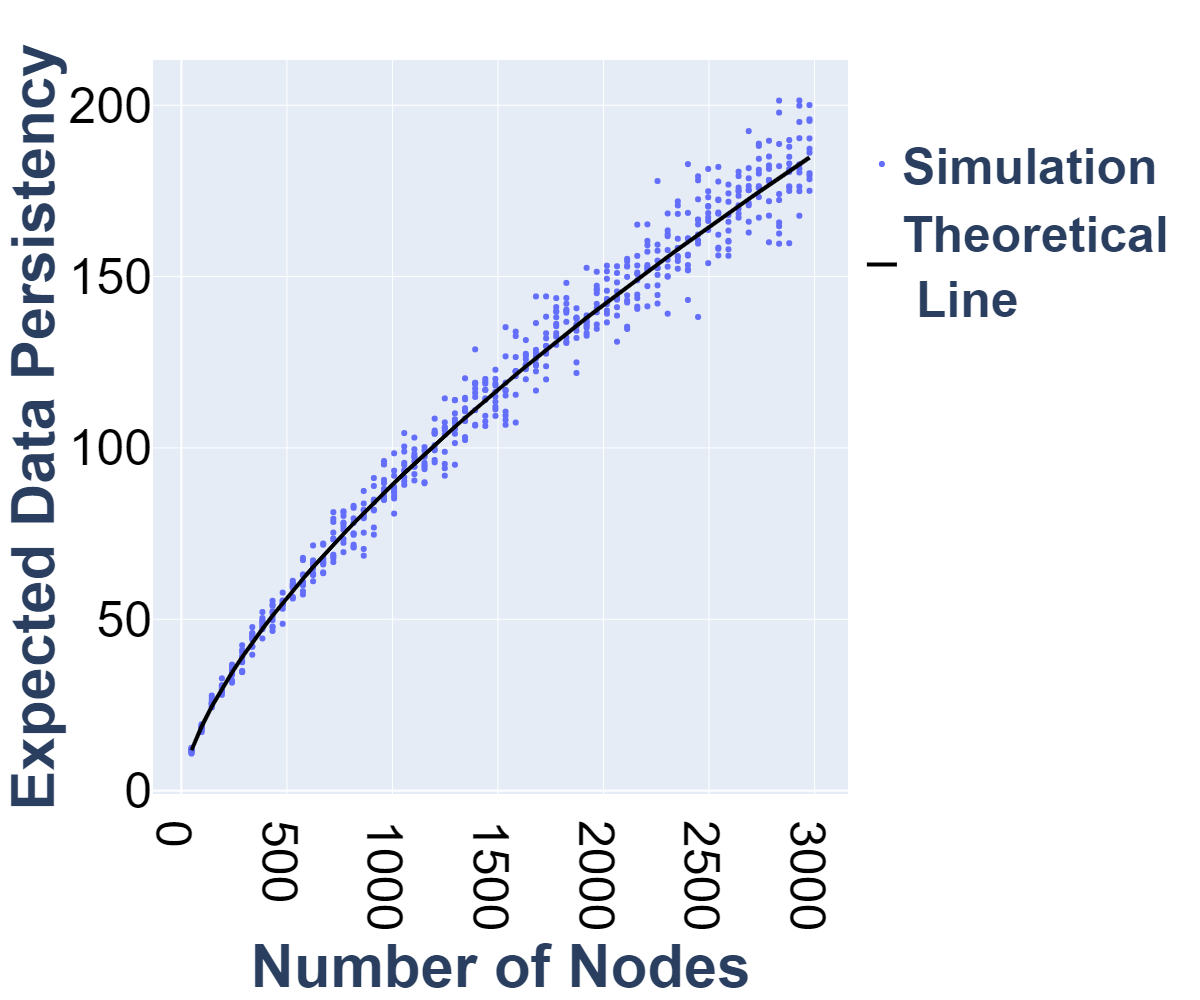}
              \vspace{2pt}
              \centerline{$\mathbf{E}\left[X_{(1,3,1)}^{(N,N)}\right]=\Gamma\left(\frac{4}{3}\right)N^{\frac{2}{3}}$}
              \vspace{-10pt}
              \caption{The expected data persistency of Algorithm \ref{alg_rand} for $p=1,$ $q=2,$ $r=1$ abd $D=N$}
              \label{fig:8}
          \end{figure}
      \end{minipage}
      \hspace{0.05\linewidth}
      \begin{minipage}{0.43\linewidth}
          \begin{figure}[H]
              \includegraphics[width=1.01\linewidth]{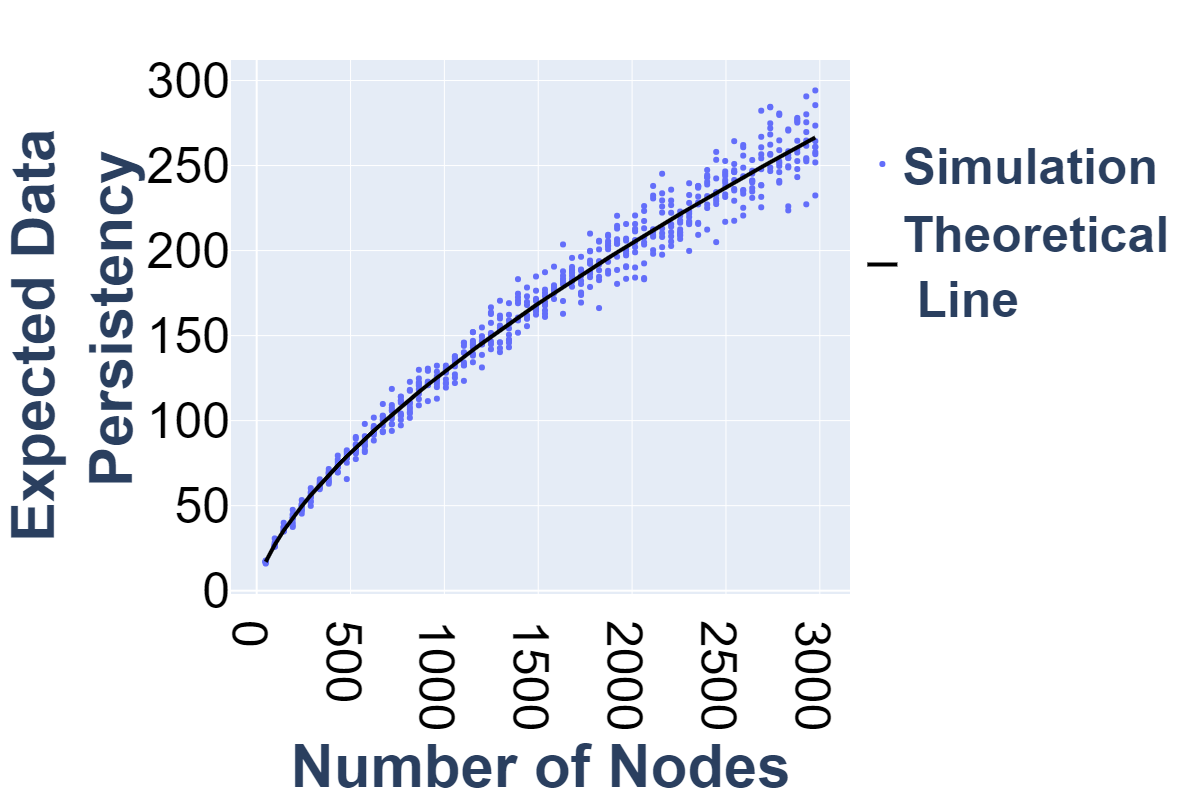}
               \vspace{2pt}
              \centerline{$\mathbf{E}\left[X_{(1,3,1)}^{(N,D)}\right]=\Gamma\left(\frac{4}{3}\right)3^{\frac{1}{3}} N^{\frac{2}{3}}$}
              \vspace{-10pt}
              \caption{The expected data persistency of Algorithm \ref{alg_symnew} for $p=1,$ $q=2,$ $r=1$}
              \label{fig:9}
          \end{figure}
      \end{minipage}
      \vspace{5pt}
  \end{minipage}    

Figure~\ref{fig:8} depicts the experimental data persistency of Algorithm~\ref{alg_rand} for $p=1,$ $q=2,$ $r=1,$ $D=N$
and the number of nodes $N\in\{48*1, 48*2, 48*3,\dots, 48*62\}$.
The additional line
$\left\{\left(N,\Gamma\left(\frac{4}{3}\right)N^{\frac{2}{3}}\right), 48\le N\le 2976\right\}$ is the leading term in the
theoretical estimation (see Theorem~\ref{thm:mainbd} for $p=1,$ $q=2,$ $r=1,$ $D=N$).
The experimental data persistency of Algorithm~\ref{alg_symnew} for $p=1,$ $q=2,$ $r=1,$ 
and the number of nodes $N\in\{48*1, 48*2, 48*3,\dots, 48*62\}$  is depicted in Figure~\ref{fig:9}.
Similary, the additional line
 $\left\{\left(N,\Gamma\left(\frac{4}{3}\right)3^{\frac{1}{3}} N^{\frac{2}{3}}\right), 48\le N\le 2976\right\}$ is the leading term in the
theoretical estimation (see Theorem~\ref{thm:mainbd} for $p=1,$ $q=2,$ $r=1$).
Figures~\ref{fig:8} and~\ref{fig:9} together illustrate the case when for both replication strategies random and symmetric the expected data persistency
is in $\Theta\left(N^{\frac{2}{3}}\right)$ (see Table~\ref{tab:dwasemen} in Section~\ref{sec:discussions}).

Finally, we point out that the carried experiments match the developed theory very well.
\section{Conclusions and Future research}
\label{sec:conclusions}
In this paper, we addressed the fundamental problem of the data persistency of replicated erasure codes for random and symmetric strategies.
To this end, we derived exact and asymptotically approximated formulas for the expected data persistency, that is, the number of nodes in the storage system that must be removed so that restoring some document would become impossible.

While we have discussed the applicability of our approach to non-uniform redundancy schemes, a natural open problem for future study is to find a closed form formula for the expected data persistency for non-uniform setting.
Additionally, it would be interesting to explore the problem of the data persistency
for a mixture of random and symmetric strategies or others strategies.
%\newpage
\bibliographystyle{plain}
\bibliography{P2P}
%\end{document}
%\appendix
%\newpage
\end{document}